\newcommand{\prob}[1]{\mathsf{Pr}\left( #1 \right)}
\newcommand{\remove}[1]{}
\newcommand{\comments}[1]{}
\newcommand{\expect}[1]{\mathsf{E}\left({#1}\right)}
\newcommand{\given}{\; \big\vert \;} 
\begin{document}

\title{ A Hidden Markov Restless Multi-armed Bandit Model for Playout Recommendation Systems}

%1.Restless Bandits That Hide their Hand and Recommendation
%  Systems
%
%
\author{Rahul Meshram\inst{1} \and Aditya Gopalan\inst{2} 
\and D. Manjunath\inst{1}
  \thanks{
    %Rahul Meshram and D.~Manjunath are with the Elecl.
     %Engg. Dept. of IIT Bombay. Aditya Gopalan is with the ECE Dept. of
    %IISc Bangalore. 
%    The work of Rahul Meshram and D. Manjunath was
%    carried out in the Bharti Centre for Communications at IIT
%    Bombay. D.~Manjunath is also supported by grants from CEFIPRA and
%    DST. 
    Part of this paper have appeared in  COMSNETS-2017 
    conference, \cite{Meshram17}.} }
%
%
    
%\institute{}
\institute{Elecl. Engg. Dept., Indian Institute of Technology Bombay, 
Mumbai-400076, India \\
%\email{I.Ekeland@princeton.edu},\\ WWW home page:
%\texttt{http://users/\homedir iekeland/web/welcome.html}
\and
ECE Dept., Indian Institute of Science, Bangalore-560012, India}

\maketitle              % typeset the title of the contribution

\begin{abstract}
  We consider a restless multi-armed bandit (RMAB) in which 
  there are two types of arms, say A and B. Each arm
  can be in one of two states, say $0$ or $1.$ 
   Playing a type A arm brings
  it to state $0$ with probability one and not playing it induces
  state transitions with arm-dependent probabilities. 
  Whereas playing a type B arm leads it to state $1$ with probability 
  $1$ and not playing it gets  state that dependent on transition probabilities of 
  arm. 
  Further, play of an arm
  generates a unit reward with a probability that depends on the state
  of the arm. The belief about the state of the arm can be calculated
  using a Bayesian update after every play. This RMAB has been
  designed for use in recommendation systems where the user's
  preferences depend on the history of recommendations. This RMAB can
  also be used in applications like creating of playlists or placement
  of advertisements. 
  In this paper we formulate the long term reward maximization problem as infinite
  horizon discounted reward and average reward  problem. We analyse the RMAB  
  by first studying discounted reward scenario. We show that it is 
  Whittle-indexable and then obtain a closed form
  expression for the Whittle index for each arm calculated from the
  belief about its state and the parameters that describe the arm. 
  We next analyse the average reward problem using vanishing discounted 
  approach and derive the closed form expression for Whittle index.
  For a RMAB to be useful in practice, we need to be able to learn
  the parameters of the arms. We present an algorithm derived from
  Thompson sampling scheme, that learns the parameters of the arms and
  also illustrate its performance numerically.

\keywords{restless multi-armed bandit; recommendation systems; POMDP; automated 
playlist creation systems; learning }
\end{abstract}

\section{Introduction}

Recommendations systems are used in almost all forms of modern media
like YouTube and other video streaming services, Spotify and other
music streaming services to create playlists. Playlists are also
created on personal devices like digital music players. Highly
personalised playlists are now being created using a variety of
information that is mined from behavior history and social networking
sites. A search for patents on playlist creation yields more than
handful of items indicating a strong commercial interest in the search
for good algorithm. The research literature though is scant.

A possible approach to create a playlist is to treat it as a `matrix
completion' problem, (e.g. \cite{Candes10}, and choose a set of items
from the completed matrix for which the user has a high preference.)  A
second approach would be to treat playlist creation as a
recommendation system and generate a sequence of recommendations for
the user.  This is the view that we take in this paper but with the
key of the user's interest in items being influenced by immediate
behavioral history. To the best of our knowledge such a system has not
been considered in the literature. Specifically, we assume that user
will like different items to be repeated at different rates and there
will be some randomness in the preferences. The playlist creation
system that we describe in this paper generates a dynamic list by
taking a binary feedback from the user after an item has been played.
Specifically, we allow different items to have different `return
times' in that some items may be played out more frequently than
others.

We  assume that  the items are of two different
categories---`normal' or type-A item and `viral' or type-B item.
For a normal item, the user preference goes from high low immediately
after playing it and rises slowly after not playing it.
The opposite is true for a `viral' item, i.e., the
 preference goes to high immediately after play and decreases when 
 it is not played.
In this model, the recommendation system observes the user preference 
for the item only after playing it and  not observed  for other items. 
This feedback  is accounted in subsequent plays of items.
Since the user preferences are not observed for other items, system 
maintains the belief about the preference of the item. It is updated 
based on the action and outcome.
% the belief about the state is updated. 
The goal of system is to maximize a long term reward function. 
Such system can be modeled a  restless multi-armed bandit (RMAB) with 
hidden states. 

Typically, the users
may have  different preferences towards different
items and also have different repetition rates. It may not be known at the 
beginning. % of recommendation. 
Thus, the  learning of associated state transition models and click
through probabilities is required.
%We 
% and the objective is  to maximize long term reward.
% while learning the models.

We next discuss the related literature on RMAB and learning in 
multi-armed bandits.
\subsection{Related Literature}

Typical recommendation system models based on multi-armed bandits take
the form of contextual bandits, e.g., \cite{Langford07,Li10}. A key
feature in these models is that the user interests are assumed
independent of the immediate recommendation history, i.e., 
the reward model is a
static. We introduce the feature of making it dependent on the immediate 
recommendation history. Other models that address user reaction in
recommendations use a finite sequence of past user responses as a
basis for deciding the current recommendation, e.g., \cite{Hariri12};
these are numerical studies and no provable properties are derived. 

The classical stochastic multi-armed bandit problem for recommendation systems (RS) 
studied in \cite{AuerCF02,Bubeck12,Lai85}, where RS  chooses the items from given 
set of items at each time step. Play of an item yields a random 
reward that is drawn from probability distribution associated with item 
and it is unknown. There goal is to maximize the expected cumulative reward.  
This model do not have states associated with each item and rewards are drawn 
independently at each time step. It is studied as online learning problem.
The performance is measured via regret, it is defined as difference between reward 
obtained using optimal strategy and reward obtained from strategy that is used for learning. A variant of stochastic multi-armed bandit considered in \cite{Caron12}.
These are solved using efficient algorithms  based on upper confidence bound (UCB).
Further, it is shown 
that expected regret scales logarithmically with time steps.  
Recently, Thompson sampling (TS) based Bayesian  algorithm considered for 
stochastic multi-armed in \cite{AgrawalG,Chapelle11,Gopalan15,GopManMan14:thompson}.
  In \cite{Chapelle11}, authors have empirically 
illustrated the performance of Thompson 
sampling algorithm and observed that it performs better than UCB. 
TS algorithm is  analysed in \cite{AgrawalG,Gopalan15} and shown that the regret scales logarithmically  with number of time steps.
 
Another stochastic bandit, a restless multi-armed bandit first studied in 
seminal work of \cite{Whittle88}, where each
  arm has states associated with it and states evolve according to a Markov chain 
  and that evolution is action dependent.  
 Further, author proposed the heuristic index based 
 policy, it is referred to as Whittle index policy.
  In \cite{Meshram15,Meshram16a}, we have considered 
a general system of a
restless multi-armed bandit with unobservable states and action
dependent transitions. In \cite{Meshram16a} we show that such a system
is \textit{approximately Whittle-indexable.} The restless bandit that
we propose in this paper is a special case of that from
\cite{Meshram16a} for which we can show exact Whittle indexability and
also obtain a closed form expression for the Whittle
index. 

The standard restless multi-armed bandit work assumes that the transition
 probabilities  and rewards are known. In recent work of \cite{LiuZhao13}, 
 UCB based learning algorithm studied for a restless multi-armed bandit   
when transition probabilities and rewards  are unknown.  
Also, Thompson sampling algorithm for a restless single armed bandit proposed 
and analysed in  \cite{Meshram16b,Meshram16c}.  
In this paper, we also propose  Thompson sampling based learning algorithm 
for RMAB and analyse its properties via numerical experiments. 

\subsection{Contributions}
This paper is an extended version of \cite{Meshram17}. Here, we extend our 
earlier work in \cite{Meshram17} to long term average reward problem and 
discuss few variants of models.  
The detailed contributions from this paper are as follows.
\begin{enumerate}
\item We develop a restless multi-armed bandit (RMAB) model for use in
  recommendation systems and playlist creation. The arms of the bandit
  correspond to the items that may be recommended. Two types of arms
  may be defined---type $A$ and type $B.$ Each `like' for a played arm
  yields a unit reward. We will seek a policy that maximises the
  infinite horizon discounted reward and a policy that maximises the long
  term average reward.  The details are in
  Section~\ref{sec:prelim}.
\item We derive the value function properties for both type of arms in 
infinite horizon discounted  reward and average reward. Using these properties, 
we obtain closed form expressions for value functions in Section~\ref{sec:whittle}.   

\item We show that both types of arms are Whittle-indexable and obtain
  closed form expressions for the Whittle index. It is a function of the
  state of the arm. This is obtained for  infinite horizon discounted reward case in Section~\ref{subsec:Whittle-discount}. 
  We also derive the expression for the Whittle
  index in average reward case, see Section~\ref{subsec:avg-whittle}.
\item The Whittle index policy is compared against a myopic policy in
  numerical experiments. We see that the index based policy indeed
  outperforms the myopic policy in many cases. This is
  covered in Section~\ref{sec:num-results}.
\item In Section~\ref{subsec:discuss} we discuss 
dual speed restless bandits for hidden states. For few  variations of type A and 
type B arms, we obtain closed form expression of Whittle index.
\item Finally, in Section~\ref{sec:learning} we provide a Thompson
  sampling based algorithm for online learning of the parameters of
  the arms. A numerical comparison of the regret shows that the
  learning is effective.

\end{enumerate}
We remark that the objective of the paper is not to design a
recommendation system but to develop a new framework with provable
properties for creating such systems.

\section{Preliminaries and Model Description}
\label{sec:prelim}
To anchor the discussion, assume an automated playlist creation system
(APCS) with $N$ items in its database. When an item is played, the
user provides a binary feedback; a possible mechanism could be by
clicking, or not clicking a `like' button. The user's interest in an
item at any time is determined by an intrinsic interest and also on
the time since it was last played. These features are captured in the
model as follows. Each item in the database corresponds to one arm of
the multi-armed bandit. The playout history of an arm is captured via
a state variable for the arm and the interest in the item is captured
via state-dependent `like'-probability for the arm. The state of each
arm evolves independently of the other arms with transition
probabilities that depend on whether it is played or not played.

We now formally describe the model. Time is measured in recommendation
steps and is indexed by $t= 1,2,3,\cdots,$ i.e., a recommendation is
made in every step. $X_t(n) \in \{0,1\}$ is the state of arm $n$ at
the beginning of step $t.$ $A_t(n) \in \{0,1\}$ is the action in step
$t$ for arm $n$ with $A_t(n) =1$ corresponding to playing arm $n$ and
$A_t(n) = 0$ corresponding to not playing it.  $X_t(n)$ evolves
according to transition probabilities that depend on $A_t(n).$ There
are two types of arms with arms in $\mathcal{A} = \{1,\ldots,M\}$
being type $A$ arms and those in $\mathcal{B} = \{M+1, \ldots,N\}$
being the type $B$ arms. $P_{ij}^{n}(a)$ denotes the transition
probability from state $i$ to state $j$ for arm $n$ under action $a.$

For type $A$ arms, i.e., for $1 \leq n \leq M,$ for $P_{00}^n(1)=1,$
$P_{10}^n(1):=1,$ $P_{01}^n(0):=p_n,$ $P_{11}^n(0):=1$ Here, $p_n$
determines the preferred `repetition rate' of arm $n.$ If $p_n$ is
small, then the user prefers a large gap between successive times that
the arm is played; if it is large then the preference is for smaller
gaps. Type $A$ arms correspond to `normal' items in that the user
prefers sufficient gap between the playing of the item.

For type $B$ arms, i.e., for $M+1 \leq n \leq N,$ the transition
probabilities are $P_{01}^n(1)=1,$ $P_{11}^n(1)=1,$ $P_{00}^n(0)=1,$
$P_{10}^n(0)=p_n.$ Type $B$ arms correspond to `viral' items where the
preference is to have it played frequently and until it is `time to
forget' it. Thus $p_n$ determines the forgetting rate for a viral
item.

When arm $n$ is played and it is in state $i,$ then a unit reward is
accrued with probability $\rho_{n,i},$ $0< \rho_{n,i} < 1.$ The reward
corresponds to the user liking the playing of the arm and $\rho_{n,i}$
represents the intrinsic preference for the item. No reward is accrued
from arms that are not played.
The transition probabilities and rewards for the two types of arms are
illustrated in Fig.~\ref{Fig:state-transitions-typical} and
Fig.~\ref{Fig:state-transitions-viral}.

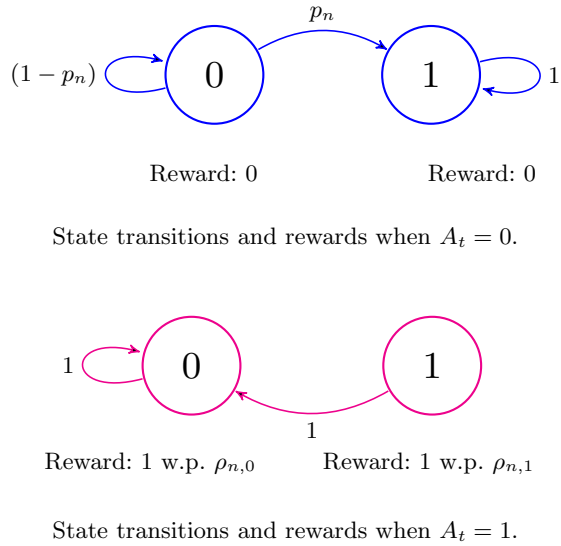
\begin{figure}
  \begin{center}
    \begin{tikzpicture}[draw=blue,>=stealth', auto, semithick, node distance=1.8cm]
      \tikzstyle{every
        state}=[fill=white,draw=blue,thick,text=black,scale=1.6]
      \node[state] (A) {$0$}; \node[state] (B)[ right of=A] {$1$};
      \path (A) edge[loop left] node{$(1-p_n)$} (A) edge[bend
        left,above,->] node{$p_n $ } (B) (B) edge[loop right]
      node{$1$} (B);
      \draw (-0.2,-1.3) node {\small{ Reward: $0$ }};
      \draw (3.5,-1.3) node {\small{ Reward: $0$ }}; 
    \end{tikzpicture}
  \end{center}

  \begin{center}
   { \small{State transitions and rewards when $A_t=0.$}}
  \end{center}
  
  \vspace{5pt}
  
  \begin{center}
    \begin{tikzpicture}[draw=magenta, >=stealth', auto, semithick, node distance=2cm]
      \tikzstyle{every state}=[fill=white,draw=magenta,thick,text=black,scale=1.6]
      \node[state]    (A)                     {$0$};
      \node[state]    (B)[ right of=A]   {$1$};
      \path
      (A) edge[loop left]     node{$1$}         (A)
      (B)     edge[bend left,below,->]      node{$1 $}         (A);    
      \draw (-0.6,-1.3) node {\small{ Reward: $1$ w.p. $\rho_{n,0}$} };
      \draw (3.1,-1.3) node {\small{ Reward: $1$  w.p. $\rho_{n,1}$} };
    \end{tikzpicture}
  \end{center}

  \begin{center}
    {\small{State transitions and rewards when $A_t=1.$}}
  \end{center}
  \caption{Transition probabilities and reward structure for type-A
    arms }
  \label{Fig:state-transitions-typical}  
\end{figure}
\begin{figure}
  \begin{center}
    \begin{tikzpicture}[draw=blue,>=stealth', auto, semithick, node distance=1.8cm]
      \tikzstyle{every state}=[fill=white,draw=blue,thick,text=black,scale=1.6]
      \node[state]    (A)                     {$0$};
      \node[state]    (B)[ right of=A]   {$1$};
      \path
      (A)  edge[loop left]     node{$1$}
      (A)
         (B)  edge[loop right]    node{$(1-p_n)$}
      (B)
       edge[bend left,below,->]      node{$p_n $}         (A); 
      \draw (-0.2,-1.3) node {\small{ Reward: $0$ }};
      \draw (3.5,-1.3) node {\small{ Reward: $0$ }}; 
    \end{tikzpicture}
  \end{center}

  \begin{center}
   { \small{State transitions and rewards when $A_t=0.$}}
  \end{center}
  
  \vspace{5pt}
  
  \begin{center}
    \begin{tikzpicture}[draw=magenta, >=stealth', auto, semithick, node distance=2cm]
      \tikzstyle{every state}=[fill=white,draw=magenta,thick,text=black,scale=1.6]
      \node[state]    (A)                     {$0$};
      \node[state]    (B)[ right of=A]   {$1$};
      \path
     (A) edge[bend left,above,->]      node{$1 $}      (B)
     (B) edge[loop right]    node{$ 1 $}   (B) ; 
      \draw (-0.6,-1.3) node {\small{ Reward: $1$ w.p. $\rho_{n,0}$} };;
      \draw (3.1,-1.3) node {\small{ Reward: $1$  w.p. $\rho_{n,1}$} };
    \end{tikzpicture}
  \end{center}

  \begin{center}
    {\small{State transitions and rewards when $A_t=1.$}}
  \end{center}
  \caption{Transition probabilities and reward structure for type-B
    arms.}
  \label{Fig:state-transitions-viral}  
\end{figure}
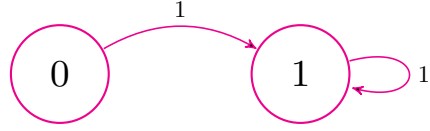

Observe that in the system, the state of an arm evolves even when it
is not played; thus this is a restless multi-armed bandit. For each
arm, we maintain a belief for the state of arm at the beginning of
time step $t,$ denoted by $\pi_t(n),$ for arm $n.$ At the end of each
time step, we can perform a Bayesian update of $\pi_t(n)$ using $p_n$
and $\rho_{n,i}$ along with the observation of the reward if the arm
is played. We will define $\pi_t(n) = \prob{X_{t}(n)
  = 0 \given H_t}.$ Here, $H_t$ denote the history of actions and observed 
  rewards up to the
beginning of time $t,$ i.e., $H_t \equiv (A_s(n), R_s(n))_{1 \leq n
  \leq N, 1 \leq s < t},$ 
and $R_t(n)$ is the reward obtained in step $n$ from arm $n.$ 
  One arm from the set of $N$ arms is to be
played at each time step.  Let $\phi=\{\phi(t)\}_{t > 0}$ be the
strategy where $\phi(t): H_t \to \{1, \ldots, N \}$ maps the history
upto time $t$ to the action of playing one of the $N$ arms at time
$t.$ Under the policy $\phi,$ let the action at time $t$ be denoted by
\begin{displaymath}
  A^{\phi}_t(n) = 
  \begin{cases}
    1 & \mbox{if }\phi(t)=n, \\
    0 & \mbox{if }\phi(t) \neq n.
  \end{cases}
\end{displaymath}
The infinite horizon expected discounted reward under policy $\phi$ is
given by
\begin{equation}
  \begin{aligned}
    && V_{\beta,\phi}(\pi) : = \expect{  \sum_{t=1}^{\infty} \beta^{t-1}
    \bigg( \sum_{n=1}^N A^{\phi}_t(n) \ (\pi_t(n) \ \rho_{n,0}
    %\right. \right. \\ 
    % &&
     + \  \left.(1-\pi_t(n)) \rho_{n,1} \right) \bigg)  } .
  \end{aligned}
  \label{eq:RMAB-valfn-discount}
\end{equation}
Here $\beta$ is the discount factor, $0<\beta <1,$ the initial belief
is $\pi(n) = \prob{X_1(n) = 0},$ and $\pi = [\pi(1), \cdots,
  \pi(N)]^{T}.$ 
%\revision{  
  The long term average reward  under policy $\phi$ is given as follows.
\begin{equation}
V_{\phi}(\pi):= \lim_{T \rightarrow \infty} \frac{1}{T}
\expect{ \sum_{t=1}^{T} \bigg( \sum_{n=1}^{N} 
A^{\phi}_t(n) \ (\pi_t(n) \ \rho_{n,0}
   + \  \left.(1-\pi_t(n)) \rho_{n,1} \right)
\bigg) }
\label{eq:RMAB-valfn-avg}
\end{equation}
%
%}
  In this paper, our goal is in a
policy $\phi$ that maximizes $V_{\beta,\phi}(\pi)$ for all $\pi \in
[0,1]^N$ assuming that we know $p_n,$ and $\rho_{n,i}$ for all $n.$
Similarly, in case of average reward problem we want to find a policy 
$\phi$ that maximizes $V_{\phi}(\pi)$ for all $\pi \in
[0,1]^N.$

\section{Towards the Whittle Index}
\label{sec:whittle}

As we have mentioned earlier, the problem
\eqref{eq:RMAB-valfn-discount} is a restless multi-armed bandit (RMAB)
with partially observable states.  In general, RMAB is computationally
intractable.  It is known to be $\mathrm{PSPACE}$-hard; see
\cite{Papadimitriou99}. In light of this hardness, heuristic policies
are sought. One class of heuristic policies is an index-based
policy. Here, at the beginning of each time step, an index is
calculated for each arm using the belief of the state of the arm, the
transition probabilities and the reward probabilities and the arm(s)
with highest index values are played in the step.  A popular
index-based policy is the Whittle-index policy based on a Lagrangian
relaxation of \eqref{eq:RMAB-valfn-discount}. This was first outlined
in \cite{Whittle88}. In many cases, this policy is known to be
asymptotically optimal; see \cite{LiuZhao10},\cite[Chapter
  $6$]{Gittins11}.  To be able to use this heuristic, we first need to
show that each arm is indexable. To effectively use it, we need to
derive the formulae to calculate the index. Indexability is proved by
analysing a single arm. 
We first analyse a single armed bandit with infinite horizon discounted 
reward problem. Later, we will examine a single armed bandit with long term
 average reward problem. 

\subsection{Discounted reward problem}
\label{subsec:Whittle-discount}
We begin by dropping the reference to $n,$ the sequence number of the
arm to simplify the notation. Next, the arm is assumed to be assigned
a \textit{subsidy} $\lambda$ for not playing it. In view of this
subsidy, \eqref{eq:RMAB-valfn-discount} may be rewritten as follows.
\begin{equation}
  \begin{aligned}
    && V_{\beta}(\pi) : = 
    \expect{ \sum_{t=1}^{\infty} \beta^{t-1}  \left( 
         A_t^{\phi} \ (\pi_t \ \rho_{0} +
            (1-\pi_t) \rho_{1})  
 %\right. \right. \\ &&  \left.\left.
  + \lambda (1-A_t^{\phi}) \right)
 }.
  \end{aligned}
  \label{eq:RMAB-Single-arm-valfn-discount}
\end{equation}
Recall that $\pi_t = \prob{X_t=0 \; | \; H_t}.$ For notational
simplicity, we rewrite $A_t^{\phi}$ as $A_t$ with policy $\phi.$ The
Bayesian updates for $\pi_t$ are obtained as follows.
\begin{itemize}
\item For a type $A$ arm: If $A_t=1,$ then $\pi_{t+1} = 1$ and if
  $A_t=0,$ then $\pi_{t+1} = (1-p)\pi_t.$
\item For a type $B$ arm: If $A_t=1,$ then $\pi_{t+1} = 0$ and if
  $A_t=0,$ then $\pi_{t+1} = \pi_t + p(1-\pi_t).$
\end{itemize}
If $A_t =1,$ then the expected reward in the step is $\pi_t \rho_0 +
(1-\pi_t) \rho_1.$ The policy $\phi(t): H_t \rightarrow \{0,1\},$ maps
the history up to time $t,$ to an action $A_t$ in $t.$ From
\cite{Ross71,BertsekasV195,BertsekasV295}, the following is well
known.
\begin{itemize}
\item $\pi_t$ captures the information in $H_t,$ and is a sufficient
  statistic to construct policies that depend on the history.
\item Optimal strategies can be restricted to stationary Markov
  policies.
\item The optimum value function for fixed $\lambda$ and $\beta$,
  denoted $V_{\beta}(\pi,\lambda),$ is determined by solving the
  following dynamic program.

  { \footnotesize
  \begin{eqnarray}
    V_{1,\beta}(\pi,\lambda) &:=& 
    \begin{cases}
      \pi \rho_0 + (1-\pi) \rho_1 + \beta V_{\beta}(1,\lambda) 
      & \mbox{for type A} \\
      \pi \rho_0 + (1-\pi) \rho_1 + \beta V_{\beta}(0,\lambda) 
      & \mbox{for type B}       
    \end{cases}
    \nonumber \\ 
    %%%%%
    V_{0,\beta}(\pi,\lambda) &:=& 
    \begin{cases}
      \lambda + \beta V_{\beta}((1-p)\pi,\lambda) 
      & \mbox{for type A} \\
      \lambda + \beta V_{\beta}(\pi+p(1-\pi),\lambda) 
      & \mbox{for type B}       
    \end{cases}
    \nonumber \\
    %%%%%%
    V_{\beta}(\pi,\lambda) &=& \max \{V_{1,\beta}(\pi,\lambda),
    V_{0,\beta}(\pi,\lambda)\}.
    \label{eq:dynamic_prog}
  \end{eqnarray}
  }
  $V_{i,\beta}(\pi,\lambda)$ is the optimal value function if $A_1=i,$ $i=0,1.$
\end{itemize}
We next derive properties of the value functions $V_{\beta},$  $V_{0,\beta},$
and $V_{1,\beta}$ in the following Lemma.
\begin{lemma}
\item 
  \begin{enumerate}
  \item For fixed $\lambda$ and $ \beta,$ $V_{\beta}(\pi,\lambda),$
    $V_{0, \beta}(\pi,\lambda),$ $V_{1,\beta}(\pi,\lambda)$ are
    non-increasing convex in $\pi.$ Furthermore,
    $V_{1,\beta}(\pi,\lambda)$ is linear in $\pi.$
  \item For fixed $\pi$ and $\beta,$ $V_{\beta}(\pi,\lambda),$
    $V_{1,\beta}(\pi,\lambda)$ and $V_{0,\beta}(\pi,\lambda)$ are
    non-decreasing convex in $\lambda.$
  \item For fixed $\pi$ and $\beta,$ $V_{1,\beta}(\pi,\lambda)$ and
    $V_{0,\beta}(\pi,\lambda)$ intersect at least once. This leads us
    to define~(1)~$\lambda_{L,\beta}$ such that for all $\lambda <
    \lambda_{L,\beta},$ the optimal action is to play the arm for all
    $\pi \in[0,1],$~and~(2)~$\lambda_{H,\beta} = \lambda_H$ such that
    for all $\lambda > \lambda_{H}$ the optimal action is to not play
    the arm for all $\pi \in [0,1].$ $\lambda_{L,\beta}$ and
    $\lambda_H$ are given as follows.
    \begin{eqnarray*}
      \lambda_H &=& 
      \begin{cases}
        \rho_1 & \mbox{for type A,} \\
        \rho_1 & \mbox{for type B,}
      \end{cases}
      \nonumber \\
      \lambda_{L,\beta} &=& 
      \begin{cases}
        \rho_0 + \beta q (\rho_0 -\rho_1) & \mbox{for type A,} \\
        \rho_1 + (1-\beta) (\rho_0- \rho_1) & \mbox{for type B.}
        \label{eq:range-lambda}
      \end{cases}
    \end{eqnarray*}

  \end{enumerate}
  \label{lemma:struct-results-1}
\end{lemma}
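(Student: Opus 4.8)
The plan is to prove all three parts by \emph{value iteration}: since $0<\beta<1$ the operator in \eqref{eq:dynamic_prog} is a contraction, so the iterates $V^{(0)}_\beta\equiv 0$ and $V^{(k+1)}_\beta(\pi,\lambda)=\max\{V^{(k+1)}_{1,\beta}(\pi,\lambda),\,V^{(k+1)}_{0,\beta}(\pi,\lambda)\}$ — with $V^{(k+1)}_{1,\beta}$, $V^{(k+1)}_{0,\beta}$ formed from $V^{(k)}_\beta$ exactly as in \eqref{eq:dynamic_prog} — converge uniformly to $V_\beta$; hence any property preserved by one iteration step and closed under pointwise limits is inherited by $V_\beta$, and then by $V_{0,\beta}$ and $V_{1,\beta}$. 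Throughout I use the standing model assumption $\rho_0<\rho_1$, which makes the one-step-payoff slope $\rho_0-\rho_1$ nonpositive. For part~(1) I would show by induction on $k$ that $V^{(k)}_\beta(\cdot,\lambda)$ is nonincreasing and convex in $\pi$: the case $k=0$ is trivial; in the step, $V^{(k+1)}_{1,\beta}(\pi,\lambda)=\pi\rho_0+(1-\pi)\rho_1+\beta V^{(k)}_\beta(c,\lambda)$ with $c\in\{0,1\}$ constant, so it is affine in $\pi$ with slope $\rho_0-\rho_1\le0$ — this is exactly the ``furthermore, linear'' claim — while $V^{(k+1)}_{0,\beta}(\pi,\lambda)=\lambda+\beta V^{(k)}_\beta(L\pi,\lambda)$ with $L\pi=(1-p)\pi$ (type~A) or $L\pi=p+(1-p)\pi$ (type~B) an increasing affine map, so composing the nonincreasing convex $V^{(k)}_\beta$ with $L$ and then scaling by $\beta>0$ and shifting preserves ``nonincreasing convex''; the pointwise maximum of two such functions is again nonincreasing convex, and $k\to\infty$ preserves it.

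Part~(2) is the same induction, now in $\lambda$ for fixed $\pi$: $V^{(0)}_\beta\equiv0$ is nondecreasing convex in $\lambda$; in $V^{(k+1)}_{1,\beta}$ the only $\lambda$-dependence is the term $\beta V^{(k)}_\beta(c,\lambda)$, nondecreasing convex by hypothesis; in $V^{(k+1)}_{0,\beta}=\lambda+\beta V^{(k)}_\beta(L\pi,\lambda)$ we add the affine function $\lambda$; the maximum and the limit preserve the property. Part~(3) carries the real content, and I would begin with $\lambda_H$. If $\lambda>\rho_1$ then the per-step payoff of any policy never exceeds $\max\{\pi_t\rho_0+(1-\pi_t)\rho_1,\lambda\}=\lambda$, so $V_\beta(\pi,\lambda)\le\lambda/(1-\beta)$; the ``never play'' policy attains this, hence $V_\beta(\pi,\lambda)=\lambda/(1-\beta)$, so $V_{0,\beta}(\pi,\lambda)=\lambda+\beta\lambda/(1-\beta)=\lambda/(1-\beta)$ while $V_{1,\beta}(\pi,\lambda)\le\rho_1+\beta\lambda/(1-\beta)<\lambda/(1-\beta)=V_{0,\beta}(\pi,\lambda)$ for every $\pi$, so not playing is strictly optimal everywhere and $\lambda_H=\rho_1$ (for both arm types).

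For $\lambda_{L,\beta}$ I would use a verification argument: conjecture that ``always play'' is optimal, which forces (type~A) $V_\beta(1,\lambda)=\rho_0/(1-\beta)$ and $V_\beta(\pi,\lambda)=\pi\rho_0+(1-\pi)\rho_1+\beta\rho_0/(1-\beta)$, and (type~B) $V_\beta(0,\lambda)=\rho_1/(1-\beta)$ and $V_\beta(\pi,\lambda)=\pi\rho_0+(1-\pi)\rho_1+\beta\rho_1/(1-\beta)$; substitute this candidate into \eqref{eq:dynamic_prog} and require $V_{1,\beta}(\pi,\lambda)\ge V_{0,\beta}(\pi,\lambda)$ for all $\pi\in[0,1]$. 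The difference $V_{1,\beta}-V_{0,\beta}$ is affine in $\pi$ with slope a nonpositive multiple of $\rho_0-\rho_1$, hence minimized at $\pi=1$; nonnegativity there gives the stated thresholds, namely $\rho_0+\beta p(\rho_0-\rho_1)$ for type~A (i.e.\ $q=p$ in the statement) and $\rho_1+(1-\beta)(\rho_0-\rho_1)$ for type~B, and uniqueness of the fixed point then shows the candidate is indeed $V_\beta$, so ``always play'' is optimal whenever $\lambda<\lambda_{L,\beta}$. Finally, for fixed $\pi$, the map $g(\lambda):=V_{1,\beta}(\pi,\lambda)-V_{0,\beta}(\pi,\lambda)$ is continuous (convex in $\lambda$ by part~(2)), satisfies $g\ge0$ for $\lambda<\lambda_{L,\beta}$ and $g<0$ for $\lambda>\lambda_H$, and $\lambda_{L,\beta}\le\rho_0\le\rho_1=\lambda_H$, so by the intermediate value theorem $g$ vanishes in between — this is the claimed intersection. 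I expect the verification step for $\lambda_{L,\beta}$ (writing down the correct candidate value function for each arm type, the algebra of $V_{1,\beta}-V_{0,\beta}$, and checking that its minimizer in $\pi$ is the endpoint) to be the main obstacle; everything else is routine bookkeeping.
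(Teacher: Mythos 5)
Your proposal is correct, and it is essentially the argument the paper intends: the paper omits this proof entirely (deferring to Lemmas 2 and 3 of \cite{Meshram16a}), but your value-iteration induction for parts (1)--(2) is exactly the technique the paper itself uses in Appendix A for Lemma~\ref{lemma:struct-results-2}, and your verification computation for part (3) --- solving the ``always play'' fixed point, noting $V_{1,\beta}-V_{0,\beta}$ is affine and minimized at $\pi=1$, then invoking the intermediate value theorem --- reproduces the stated thresholds. You are also right that the $q$ appearing in the type-A expression for $\lambda_{L,\beta}$ is a typo for $p$, since the main model has no parameter $q$ (it only appears in the dual-speed variant of Section~\ref{subsec:discuss}).
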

The proof of Lemma~\ref{lemma:struct-results-1} is analogous to the proofs in
 \cite[Lemma $2$ and $3$]{Meshram16a}.
 Hence we omit the proof.
  In the next Lemma, we  state the Lipschitz properties of value  function with respect to $\pi$ and $\lambda.$ 
\begin{lemma}
\item 
  \begin{enumerate}
  \item For fixed $\lambda$ and $\beta,$ and $\forall \pi_1, \pi_2
    \in [0,1],$ we have 
    \begin{eqnarray*}
      \big\vert V_{\beta}(\pi_1,\lambda) -
      V_{\beta}(\pi_2,\lambda) \big\vert \leq
      (\rho_1-\rho_0) |\pi_1 -\pi_2|
    \end{eqnarray*}
  \item For fixed $\pi \in[0,1],$ and for $0<\beta<1,$ \\
    $$
    \frac{\partial V_{\beta}(\pi,\lambda)}{\partial \lambda}, \ \ 
    \frac{\partial V_{1,\beta}(\pi,\lambda)}{\partial \lambda}, \ \ 
    \mbox{ and } \frac{\partial V_{0,\beta}(\pi,\lambda)}{\partial
      \lambda}
    $$
    are bounded above by $\frac{1}{1-\beta}.$
\end{enumerate}  
  \label{lemma:struct-results-2}
\end{lemma}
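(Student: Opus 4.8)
For part~(1) the plan is to propagate the Lipschitz constant $\rho_1-\rho_0$ through the successive-approximation iterates of the Bellman recursion \eqref{eq:dynamic_prog}. Set $V^{(0)}_{\beta}\equiv 0$, define $V^{(k+1)}_{1,\beta}$ and $V^{(k+1)}_{0,\beta}$ from $V^{(k)}_{\beta}$ through the two branches of \eqref{eq:dynamic_prog}, and put $V^{(k+1)}_{\beta}=\max\{V^{(k+1)}_{1,\beta},V^{(k+1)}_{0,\beta}\}$; since the dynamic-programming operator is a $\beta$-contraction in the sup-norm (as in the references cited after \eqref{eq:dynamic_prog}), $V^{(k)}_{\beta}\to V_{\beta}$ pointwise, so any Lipschitz constant common to all the iterates also bounds $V_{\beta}$. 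I would then show by induction that every $V^{(k)}_{\beta}(\cdot,\lambda)$ is $(\rho_1-\rho_0)$-Lipschitz in $\pi$. The key observation is that both belief maps are affine with controlled slope. When the arm is played the posterior collapses to the deterministic value $1$ (type~A) or $0$ (type~B), so the continuation term $\beta V^{(k)}_{\beta}(\cdot,\lambda)$ appearing in $V^{(k+1)}_{1,\beta}$ is a constant in $\pi$, whence $V^{(k+1)}_{1,\beta}(\pi,\lambda)=\pi\rho_0+(1-\pi)\rho_1+\text{const}$ is affine in $\pi$ with slope $\rho_0-\rho_1$, that is, exactly $(\rho_1-\rho_0)$-Lipschitz. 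When the arm is not played the update $\pi\mapsto(1-p)\pi$ (type~A) or $\pi\mapsto(1-p)\pi+p$ (type~B) is affine with slope $1-p\in(0,1)$, so if $V^{(k)}_{\beta}$ is $L$-Lipschitz then $V^{(k+1)}_{0,\beta}$ is $\beta(1-p)L$-Lipschitz, hence a fortiori $L$-Lipschitz since $\beta(1-p)<1$. A pointwise maximum of two $c$-Lipschitz functions is $c$-Lipschitz, so $V^{(k+1)}_{\beta}$ retains the constant $\rho_1-\rho_0$ (the base case $k=0$ being trivial), and letting $k\to\infty$ gives the claim. (That $\rho_1-\rho_0\ge0$ is exactly what makes $V_{1,\beta}$ non-increasing in Lemma~\ref{lemma:struct-results-1}, so this is the natural constant.)

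For part~(2) I would avoid differentiating the maximum and argue instead by policy interchange, which is legitimate because an optimal stationary Markov policy exists (as recalled just before \eqref{eq:dynamic_prog}). Fix $\pi$, $\beta$ and $\lambda_1<\lambda_2$, and let $\phi^{\star}$ be optimal for subsidy $\lambda_2$. Then $V_{\beta}(\pi,\lambda_2)$ equals the value of $\phi^{\star}$ at $\lambda_2$, while $V_{\beta}(\pi,\lambda_1)$ is at least the value of $\phi^{\star}$ at $\lambda_1$; since the two values of the \emph{same} policy differ only through the nonnegative discounted subsidy term,
\[
V_{\beta}(\pi,\lambda_2)-V_{\beta}(\pi,\lambda_1)\ \le\ (\lambda_2-\lambda_1)\,\expect{\textstyle\sum_{t\ge1}\beta^{t-1}\bigl(1-A^{\phi^{\star}}_t\bigr)}\ \le\ \frac{\lambda_2-\lambda_1}{1-\beta}.
\]
Combined with the fact (Lemma~\ref{lemma:struct-results-1}) that $V_{\beta}(\pi,\cdot)$ is non-decreasing and convex, this yields $0\le\partial V_{\beta}/\partial\lambda\le 1/(1-\beta)$, interpreted as a one-sided derivative at the at most countably many kinks. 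The bounds for the other two value functions then follow from their defining identities in \eqref{eq:dynamic_prog}: $\partial V_{1,\beta}/\partial\lambda=\beta\,\partial V_{\beta}/\partial\lambda\le\beta/(1-\beta)\le1/(1-\beta)$, and $\partial V_{0,\beta}/\partial\lambda=1+\beta\,\partial V_{\beta}/\partial\lambda\le 1+\beta/(1-\beta)=1/(1-\beta)$, so the bound is in fact attained on the ``do not play'' branch.

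The calculations are routine; the points needing care are (i) the non-smoothness of $V_{\beta}$ — I would work throughout with Lipschitz constants and one-sided derivatives and use the convexity from Lemma~\ref{lemma:struct-results-1} to turn a Lipschitz-in-$\lambda$ bound into a derivative bound — and (ii) checking that a single induction covers type~A and type~B at once, which it does precisely because in both cases the play-action posterior is a fixed point (killing the $\pi$-dependence of that branch) and the no-play posterior map is affine with slope $1-p<1$. I expect (i) to be the only mild obstacle; as a fallback, part~(2) can also be carried out inside value iteration by tracking $d_k:=\sup_{\pi,\lambda}\partial_{\lambda}V^{(k)}_{\beta}$ through the recursion $d_{k+1}=1+\beta d_k$ with $d_0=0$, which converges to $1/(1-\beta)$.
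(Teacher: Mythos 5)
Your proposal is correct. For part (1) you take essentially the same route as the paper: an induction over the value-iteration iterates, using that the play branch collapses the belief to a fixed point (so its continuation term is constant in $\pi$ and the branch is affine with slope $\rho_0-\rho_1$), that the no-play belief map is affine with slope $1-p$ so the Lipschitz constant contracts by $\beta(1-p)<1$, and that a pointwise maximum preserves a common Lipschitz constant; the paper phrases this as a bound on the absolute slope of $V_{\beta,n}$ and passes to the limit by uniform convergence, exactly as you do. For part (2) you depart from the paper: the paper again runs the induction through value iteration, tracking $\partial_\lambda V_{\beta,n}$ via the recursion $1+\beta d_n\le 1/(1-\beta)$ on the passive branch and $\beta d_n$ on the active branch, whereas you prove a global Lipschitz-in-$\lambda$ bound on the optimal value function by a policy-interchange argument (evaluate the $\lambda_2$-optimal stationary policy at both subsidies; the trajectory law is unchanged, so the values differ only by the discounted subsidy mass, at most $(\lambda_2-\lambda_1)/(1-\beta)$) and then convert this to a one-sided derivative bound using the convexity and monotonicity in $\lambda$ from Lemma~\ref{lemma:struct-results-1}, finally reading off the bounds for $V_{1,\beta}$ and $V_{0,\beta}$ from the Bellman identities. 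Your route is more conceptual, yields the two-sided bound $0\le\partial_\lambda V_\beta\le 1/(1-\beta)$ in one stroke, and carries over verbatim to the average-reward limit; the paper's induction is more mechanical but self-contained and sidesteps any appeal to the existence of an optimal stationary policy. The only point to state carefully is that $\partial_\lambda V_{1,\beta}(\pi,\lambda)$ equals $\beta$ times the $\lambda$-derivative of $V_\beta$ evaluated at the collapsed belief ($1$ for type A, $0$ for type B), not at $\pi$ itself; since your bound on $\partial_\lambda V_\beta$ is uniform in the belief, this does not affect the conclusion. Your fallback recursion $d_{k+1}=1+\beta d_k$ is precisely the paper's argument.
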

The proof is given in Appendix~\ref{proof:struct-results-2}.
\begin{remark}
It is possible that $V_{\beta}(\pi, \lambda)$ is not differentiable with respect to 
$\pi$ or $ \lambda.$ In that case the partial derivative of $V_{\beta}(\pi, \lambda)$
 should be taken to be the right partial derivative. 
 Note that such the right partial derivative exists because $V_{\beta}(\pi, \lambda)$ 
 is  convex  in  $\pi,$ $\lambda,$ and bounded. 
\end{remark}

Using the Lipschitz property of $V_{\beta}(\pi,\lambda)$ in $\pi,$ we derive the next result.
\begin{lemma}
 For fixed $\lambda$ and $\beta,$ $V_{1,\beta}(\pi,\lambda) -
    V_{0,\beta}(\pi,\lambda)$ is decreasing in $\pi.$
 \label{lemma:v-pi-monotone}
\end{lemma}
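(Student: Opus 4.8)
The plan is to reduce the claim to the Lipschitz bound on $V_{\beta}(\cdot,\lambda)$ in $\pi$ from Lemma~\ref{lemma:struct-results-2}(1), and to work with finite differences rather than derivatives so that possible non-differentiability of $V_{\beta}$ is irrelevant. Fix $\lambda,\beta$, pick $\pi_1<\pi_2$ in $[0,1]$, and set $g(\pi):=V_{1,\beta}(\pi,\lambda)-V_{0,\beta}(\pi,\lambda)$. The assertion $g(\pi_1)\ge g(\pi_2)$ is equivalent to
\[
V_{1,\beta}(\pi_1,\lambda)-V_{1,\beta}(\pi_2,\lambda)\ \ge\ V_{0,\beta}(\pi_1,\lambda)-V_{0,\beta}(\pi_2,\lambda),
\]
so I would estimate the two sides separately.

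For the left-hand side, recall from the dynamic program that $V_{1,\beta}(\pi,\lambda)=\pi\rho_0+(1-\pi)\rho_1+\beta V_{\beta}(c,\lambda)$ with $c=1$ for type A and $c=0$ for type B; hence $V_{1,\beta}$ is affine in $\pi$ with slope $\rho_0-\rho_1$, and the left-hand side equals $(\rho_1-\rho_0)(\pi_2-\pi_1)$ (with $\rho_1>\rho_0$ in both arm types, consistent with the interpretation of state $0$ as low preference). For the right-hand side I would use the explicit belief updates: for type A, $V_{0,\beta}(\pi,\lambda)=\lambda+\beta V_{\beta}((1-p)\pi,\lambda)$, so the right-hand side is $\beta\bigl(V_{\beta}((1-p)\pi_1,\lambda)-V_{\beta}((1-p)\pi_2,\lambda)\bigr)$, which by Lemma~\ref{lemma:struct-results-2}(1) is at most $\beta(\rho_1-\rho_0)(1-p)(\pi_2-\pi_1)$. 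Since $0<\beta<1$ and $0\le p\le 1$ give $\beta(1-p)<1$, the right-hand side is strictly smaller than $(\rho_1-\rho_0)(\pi_2-\pi_1)$, i.e. than the left-hand side, which yields the claim (in fact a strict decrease when $\pi_1\neq\pi_2$). For type B the argument is verbatim after replacing the post-action belief map $\pi\mapsto(1-p)\pi$ by $\pi\mapsto p+(1-p)\pi$, which is again $(1-p)$-Lipschitz and contracts the belief gap by exactly the factor $1-p$.

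I do not expect a real obstacle here. The whole point is that "play" contributes a fixed negative slope $\rho_0-\rho_1$ to $g$, while "not play" only moves $\pi$ through a discounted contraction of the belief interval, so its contribution to the difference of value functions is dominated in magnitude by the Lipschitz constant $\rho_1-\rho_0$ times $\beta(1-p)<1$. The only point to be careful about is to phrase everything through the Lipschitz inequality of Lemma~\ref{lemma:struct-results-2} rather than through $\partial V_{\beta}/\partial\pi$; the convexity and monotonicity statements of Lemma~\ref{lemma:struct-results-1} are not needed for this particular result.
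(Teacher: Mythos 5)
Your proposal is correct and follows essentially the same route as the paper's proof: both compute the difference $d_\lambda(\pi_2)-d_\lambda(\pi_1)$ via the affine slope $\rho_0-\rho_1$ of $V_{1,\beta}$ and bound the not-play term by the Lipschitz estimate of Lemma~\ref{lemma:struct-results-2}(1) composed with the $(1-p)$-contraction of the belief map, concluding from $\beta(1-p)<1$. The only difference is that you spell out the type-B case explicitly, whereas the paper states it is analogous and omits it.
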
 
The proof is detailed  in Appendix~\ref{proof:v-pi-monotone}.
We now ready to present our first main result on a threshold policy structure.
%
%---------------------------------------------------------------
%
\begin{theorem}[Single threshold policy]
  For the single-armed bandit, $0<\beta <1,$ and $\lambda_L \leq
  \lambda \leq \lambda_{H,\beta},$ the optimal policy is of threshold
  type with a single threshold.  That is, there is a unique threshold
  $\pi_T \in[0,1]$ such that
  \begin{eqnarray*}
    V_{\beta}(\pi,\lambda) = 
    \begin{cases}
      V_{1,\beta}(\pi,\lambda) & \mbox{if $\pi \leq \pi_T,$ } \\
      V_{0,\beta}(\pi,\lambda) & \mbox{if $\pi \geq \pi_T,$ }
    \end{cases}
  \end{eqnarray*}
  where $\pi_T = \{ \pi \in [0,1]: V_{1,\beta}(\pi,\lambda) =
  V_{0,\beta}(\pi,\lambda)\}.$
  \label{thm:single-threshold-type-A}
\end{theorem}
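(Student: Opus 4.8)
The plan is to combine the three structural facts already established: (i) $V_{1,\beta}(\pi,\lambda)$ and $V_{0,\beta}(\pi,\lambda)$ intersect at least once on $[0,1]$ (Lemma~\ref{lemma:struct-results-1}, part~3), (ii) the difference $g(\pi) := V_{1,\beta}(\pi,\lambda) - V_{0,\beta}(\pi,\lambda)$ is \emph{decreasing} in $\pi$ (Lemma~\ref{lemma:v-pi-monotone}), and (iii) both $V_{1,\beta}$ and $V_{0,\beta}$ are continuous in $\pi$ (they are convex on $[0,1]$ by Lemma~\ref{lemma:struct-results-1}, part~1, hence continuous on the open interval, and one checks continuity at the endpoints directly from the dynamic program, or simply invokes the Lipschitz bound of Lemma~\ref{lemma:struct-results-2}, part~1, applied to each of $V_{1,\beta}$, $V_{0,\beta}$). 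So $g$ is a continuous, monotone decreasing function on $[0,1]$ that takes a nonnegative value somewhere and a nonpositive value somewhere (by the intersection claim, together with the definitions of $\lambda_L$ and $\lambda_{H,\beta}$, which guarantee that for $\lambda$ in the stated range neither action is uniformly optimal).

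First I would make the range hypothesis do its work. For $\lambda < \lambda_{L,\beta}$ playing is optimal for every $\pi$, and for $\lambda > \lambda_{H,\beta} = \rho_1$ not playing is optimal for every $\pi$; at the boundary case $\lambda = \lambda_{H,\beta}$ one has $g(1) \le 0$ (at $\pi=1$ the immediate reward from playing is exactly $\rho_1 = \lambda$, and the subsidy branch dominates or ties), and at $\lambda = \lambda_L$ one has $g(0)\ge 0$. More carefully, for $\lambda_L \le \lambda \le \lambda_{H,\beta}$ I claim $g(0) \ge 0$ and $g(1) \le 0$: the first because at $\pi = 0$ the per-step play reward is $\rho_0$ while the subsidy is at most $\lambda_{L,\beta}$, and a short comparison of the two branches of \eqref{eq:dynamic_prog} gives $V_{1,\beta}(0,\lambda) \ge V_{0,\beta}(0,\lambda)$; the second symmetrically at $\pi=1$. (For a type~$A$ arm, $\pi=1$ is absorbing under not-playing with zero reward forever versus $\rho_1$ now; for type~$B$, $\pi=1$ is the "state~$1$" belief and again the comparison is immediate.) Given $g(0)\ge 0 \ge g(1)$, continuity yields a zero, and monotonicity of $g$ makes the zero set a single point or a closed sub-interval; defining $\pi_T$ as the infimum of $\{\pi : g(\pi) \le 0\}$ (equivalently the point where $g$ changes sign) gives the stated threshold, with $V_\beta = V_{1,\beta}$ for $\pi \le \pi_T$ and $V_\beta = V_{0,\beta}$ for $\pi \ge \pi_T$, and $V_\beta = \max\{V_{1,\beta}, V_{0,\beta}\}$ from \eqref{eq:dynamic_prog} closes the argument.

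Uniqueness of the threshold follows because $g$ is monotone: if $g$ vanished on a nondegenerate interval, any point of that interval serves as $\pi_T$ and the policy description is unchanged (indifference), so the threshold is well-defined as the sign-change location; strict monotonicity of $g$ on the relevant region—which one gets from the fact that $V_{1,\beta}$ is \emph{linear} and strictly decreasing in $\pi$ (slope $\rho_0 - \rho_1 < 0$) while $V_{0,\beta}$ is convex and non-increasing, so $g$ is concave and its decrease is strict wherever $V_{0,\beta}$ is not itself linear with the same slope—pins it down to a single point.

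The main obstacle I anticipate is not the topological skeleton (continuity $+$ monotonicity $+$ intermediate value) but the boundary bookkeeping: verifying cleanly that $g(0) \ge 0$ and $g(1) \le 0$ throughout the closed range $\lambda_L \le \lambda \le \lambda_{H,\beta}$, handling type~$A$ and type~$B$ separately, and reconciling this with the precise definitions of $\lambda_{L,\beta}$ and $\lambda_{H,\beta}$ from Lemma~\ref{lemma:struct-results-1}. A secondary subtlety is the non-differentiability of $V_\beta$: since we only ever use monotonicity and continuity of the \emph{difference} $g$, and never a derivative, this is harmless, but it is worth a sentence noting that the threshold characterization $\pi_T = \{\pi : V_{1,\beta}(\pi,\lambda) = V_{0,\beta}(\pi,\lambda)\}$ should be read, in the degenerate case, as "the unique point at which the two curves cross," which is well-posed precisely because of Lemma~\ref{lemma:v-pi-monotone}.
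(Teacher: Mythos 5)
Your proposal is correct and follows essentially the same route as the paper: the paper's proof likewise combines the at-least-one-intersection claim of Lemma~\ref{lemma:struct-results-1}(3) with the monotonicity of $V_{1,\beta}-V_{0,\beta}$ from Lemma~\ref{lemma:v-pi-monotone} (plus linearity of $V_{1,\beta}$ and convexity of $V_{0,\beta}$) to conclude there is exactly one threshold. The only difference is that you re-derive the existence of a crossing via boundary sign checks and the intermediate value theorem, whereas the paper simply cites the intersection claim from the lemma.
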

\begin{proof}
  Fix $\beta, \lambda.$ From Lemma \ref{lemma:struct-results-1} and
  \ref{lemma:v-pi-monotone}, we have the following. For a fixed
  $\lambda,$ $V_{1,\beta}(\pi,\lambda)$ is linear in $\pi$ and
  $V_{0,\beta}(\pi,\lambda)$ is convex in $\pi.$ Also,
  $V_{1,\beta}(\pi,\lambda) -V_{0,\beta}(\pi,\lambda)$ is decreasing
  in $\pi.$ Thus there is at most one threshold.  Furthermore,
  $V_{1,\beta}(\pi,\lambda)$ and $V_{0,\beta}(\pi,\lambda)$ intersect
  at least once for $\lambda_{L,\beta}\leq\lambda \leq \lambda_H,$
  $\pi \in [0,1].$ This completes the proof.
\qed
\end{proof}
\begin{remark}
\item 
\begin{enumerate}
\item 
The threshold policy implies that whenever belief at time step $t,$ $\pi_t$
 is greater than $\pi_T,$ then the optimal action is to not play the item.
%\revision{
\item For type-A arm, the belief about the state $0$  evolves to $(1-p) \pi_t$
 when arm is not played.  Thus, $\pi_{t+1}$ decreases  whenever the 
 item is not played and after some time steps, the optimal action will be to play 
 the item. Once item is played, then $\pi_{t+1}$  reaches the state $0$ with probability $1.$ Let $K$ denotes the number  of time steps  to wait to play that item again and it  depends on  $\pi_T,$  $\pi$ and $p.$ It is defined as follows.
\[
K(\pi,\pi_T) := \min \{k \geq 0: (1-p)^k\pi < \pi_T \}.
\]
Then, 
\begin{eqnarray}
K(\pi, \pi_T) =
\begin{cases}
0 \ \ \ \ \  \  \ \ \ \ \  \ \ \ \ \ \ \ \ \ \ \ \ \mbox{if $\pi<\pi_T,$ } \\
\left \lfloor {\frac{\log(\frac{\pi_T}{\pi})}{\log(1-p)}}  \right \rfloor+1 \ \ \ \ \  \mbox{ if $\pi \geq \pi_T$.}
\end{cases}
\label{eq:K-step-wait-normal}
\end{eqnarray}
Using a threshold policy result and $K(\pi,\pi_T),$ we can derive  the value function  expressions as follows. 
\begin{eqnarray}
V_{1,\beta}(\pi,\lambda) &=& \pi \rho_0 + (1-\pi) \rho_1 + \beta V_{0,\beta} (1, \lambda), \nonumber \\  \nonumber \\
V_{0,\beta}(\pi,\lambda) &=& \frac{\lambda \left( 1- \beta^{K(\pi,\pi_T)}\right)}{1-\beta} +  \beta^{K(\pi,\pi_T)+1} V_{0,\beta}(1,\lambda) 
 + \beta^{K(\pi,\pi_T)} \times \nonumber \\  
&& \ \ \ \ \  \hspace{50pt}\left(\rho_1 + (\rho_0 - \rho_1) (1-p)^{K(\pi,\pi_T)} \pi \right),  \nonumber \\ \nonumber \\
V_{0,\beta}(1,\lambda) &=& \frac{\lambda (1-\beta^{K(1, \pi_T)}) }{\left(1-\beta^{K(1,\pi_T)+1}\right)(1-\beta)}
+ \beta^{K(1,\pi_T)}  
\frac{ \left(\rho_1 + (\rho_0 - \rho_1)(1-p)^{K(1,\pi_T)} \right)}{\left(1-\beta^{K(1,\pi_T)+1}\right)}. \nonumber \\ 
\label{eq:value-fun-type-a}
\end{eqnarray}
\item For type-B arm, the belief about the state $0$ evolves to 
$\pi_t + p (1-\pi_t)$  when arm is not played. 
Observe that $\pi_{t+1}$ is increases whenever arm is not played. 
 Define $T(\pi) := \pi + p (1-\pi),$ $T^k(\pi) = T^{k-1}(T(\pi))$ and 
 $\lim_{k \rightarrow \infty} T^k(\pi) =1.$ 
When a threshold $\pi_T \in (0,1],$ note that $V_{\beta}(0,\lambda) = 
V_{1,\beta}(0,\lambda)= \frac{\rho_1}{1-\beta}.$  
Also, following  holds.
\begin{eqnarray*}
V_{\beta}(T(\pi),\lambda)= 
\begin{cases}
 V_{1,\beta}(T(\pi),\lambda) & \mbox{if $T(\pi) < \pi_T,$} \nonumber \\
 V_{0,\beta}(T(\pi),\lambda) & \mbox{if $T(\pi) \geq \pi_T.$ }
\end{cases}
\end{eqnarray*}
This discussion suggests that once that item is played, the user keeps liking that item. If the 
item is not played to the user, i.e. $\pi_t \geq \pi_T$, then  the state of the item $\pi_t$ is always greater than $\pi_T$ for all $t$  and that means, that item is not played to the user at all. 
 Thus the value function expressions are described below. 
%When $\pi_T \in (0,1]$ then
\begin{eqnarray}
V_{1,\beta}(\pi,\lambda) &=& \pi \rho_0 + (1-\pi) \rho_1 + \beta \frac{\rho_1}{1-\beta}, \nonumber  \\
V_{0,\beta}(\pi, \lambda) &=& 
\begin{cases}
\lambda + \beta (\rho_0 T(\pi) + (1-T(\pi)) \rho_1) + \beta \frac{\rho_1}{1-\beta} 
& \mbox{ if  $T(\pi) < \pi_T,$} \\
\frac{\lambda}{1-\beta} & \  \mbox{if $T(\pi) \geq \pi_T,$}  
\end{cases}
\label{eq:value-fun-type-b-piTnonz}
\end{eqnarray}
for $\pi_T \in (0,1].$
When $\pi_T = 0,$ we have 
\begin{eqnarray*}
V_{1,\beta}(\pi,\lambda) &=& \pi \rho_0 + (1-\pi) \rho_1 + \beta 
\frac{\lambda}{1-\beta}, \\
V_{0,\beta}(\pi, \lambda) &=& \frac{\lambda}{1-\beta}.
\end{eqnarray*}
%}
\end{enumerate}
\label{rem:threshold-policy}
\end{remark}
From Theorem~\ref{thm:single-threshold-type-A} and Remark~\ref{rem:threshold-policy},
 we can show the following.
\begin{lemma}
  For fixed $\pi$ and $\beta,$ $V_{1,\beta}(\pi,\lambda) -
  V_{0,\beta}(\pi,\lambda)$ is a decreasing in $\lambda.$
  \label{lemma:struct-result-3}
\end{lemma}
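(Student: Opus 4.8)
The plan is to establish, for every fixed $\pi$ and $\beta,$ the pointwise inequality $\dfrac{\partial V_{1,\beta}(\pi,\lambda)}{\partial\lambda}\le\dfrac{\partial V_{0,\beta}(\pi,\lambda)}{\partial\lambda}$ (with $\partial/\partial\lambda$ read as the right derivative where needed); since $V_{1,\beta}(\pi,\cdot)-V_{0,\beta}(\pi,\cdot)$ is Lipschitz, hence absolutely continuous, in $\lambda$ by Lemma~\ref{lemma:struct-results-2}, this pointwise inequality forces the difference to be non-increasing in $\lambda,$ which is the claim. From the dynamic program~\eqref{eq:dynamic_prog},
\begin{equation*}
V_{0,\beta}(\pi,\lambda)=\lambda+\beta V_\beta\big(g_0(\pi),\lambda\big),\qquad V_{1,\beta}(\pi,\lambda)=\pi\rho_0+(1-\pi)\rho_1+\beta V_\beta\big(g_1(\pi),\lambda\big),
\end{equation*}
where $(g_1(\pi),g_0(\pi))=(1,(1-p)\pi)$ for type $A$ and $(g_1(\pi),g_0(\pi))=(0,\pi+p(1-\pi))$ for type $B.$ Differentiating, the desired inequality is equivalent to $\beta\big(\tfrac{\partial V_\beta(g_1(\pi),\lambda)}{\partial\lambda}-\tfrac{\partial V_\beta(g_0(\pi),\lambda)}{\partial\lambda}\big)\le 1.$

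For type $B$ arms this is immediate from Remark~\ref{rem:threshold-policy}: when $\pi_T\in(0,1]$ we have $V_\beta(0,\lambda)=\rho_1/(1-\beta),$ which is independent of $\lambda,$ so the left side is $\le 0$; and when $\pi_T=0$ the explicit forms there give $\tfrac{\partial V_{1,\beta}}{\partial\lambda}=\tfrac{\beta}{1-\beta}\le\tfrac{1}{1-\beta}=\tfrac{\partial V_{0,\beta}}{\partial\lambda}$ directly.

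For type $A$ arms the generic bound $\tfrac{\partial V_\beta}{\partial\lambda}\le\tfrac{1}{1-\beta}$ from Lemma~\ref{lemma:struct-results-2} alone does not suffice (it is too weak once $\beta>\tfrac12$), so I would use the threshold structure of Theorem~\ref{thm:single-threshold-type-A} together with~\eqref{eq:value-fun-type-a}. Under the threshold policy, from any belief $x$ one waits $K(x,\pi_T)$ steps collecting the subsidy and then plays, after which the belief returns to $1$; differentiating in $\lambda$ (with $K(x,\pi_T)$ locally constant) gives
\begin{equation*}
\frac{\partial V_\beta(x,\lambda)}{\partial\lambda}=\frac{1-\beta^{K(x,\pi_T)}}{1-\beta}+\beta^{K(x,\pi_T)+1}\,\frac{\partial V_\beta(1,\lambda)}{\partial\lambda}.
\end{equation*}
Setting $x=1$ and solving yields $\tfrac{\partial V_\beta(1,\lambda)}{\partial\lambda}=\tfrac{1-\beta^{K_1}}{(1-\beta)(1-\beta^{K_1+1})}$ with $K_1:=K(1,\pi_T)\ge 1$; since $(1-p)\pi\le 1$ and $K(\cdot,\pi_T)$ is non-decreasing, $K_2:=K((1-p)\pi,\pi_T)\le K_1.$ Substituting and simplifying,
\begin{equation*}
\frac{\partial V_\beta(1,\lambda)}{\partial\lambda}-\frac{\partial V_\beta((1-p)\pi,\lambda)}{\partial\lambda}=\frac{\beta^{K_2}-\beta^{K_1}}{1-\beta^{K_1+1}},
\end{equation*}
so $\beta$ times this equals $\tfrac{\beta^{K_2+1}-\beta^{K_1+1}}{1-\beta^{K_1+1}}\le 1,$ the last step simply because $\beta^{K_2+1}\le 1.$ This proves the derivative inequality for all $\lambda$ with $\pi_T\in(0,1)$; for $\lambda\le\lambda_{L,\beta}$ (play always optimal, so $V_{1,\beta}$ is free of $\lambda$ while $\partial V_{0,\beta}/\partial\lambda=1$) and for $\lambda\ge\lambda_{H,\beta}$ (never play, so the two $\lambda$-derivatives are $\tfrac{1}{1-\beta}$ and $\tfrac{\beta}{1-\beta}$) it is immediate, and continuity of $V_{1,\beta}-V_{0,\beta}$ in $\lambda$ glues the pieces together.

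The main obstacle is exactly the type-$A$ step: the crude Lipschitz bound on $\partial V_\beta/\partial\lambda$ is not enough, so one must exploit the explicit geometric (cycle) structure of the threshold policy to see that $\partial V_\beta(1,\lambda)/\partial\lambda$ and $\partial V_\beta((1-p)\pi,\lambda)/\partial\lambda$ differ by an amount that, scaled by $\beta,$ never exceeds $1.$ A secondary, purely bookkeeping point is the behaviour at the finitely many $\lambda$ where $K(1,\pi_T(\lambda))$ or $K((1-p)\pi,\pi_T(\lambda))$ jumps (equivalently, where $\pi_T$ hits $0$ or $1$), which is absorbed by the absolute-continuity argument rather than handled by pointwise differentiation.
\qed
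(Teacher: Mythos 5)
Your proof is correct and follows essentially the same route as the paper's: both exploit the single-threshold structure and the resulting closed-form value functions to express the $\lambda$-derivatives of $V_{1,\beta}$ and $V_{0,\beta}$ through $K(\cdot,\pi_T)$ and $\partial V_{\beta}(1,\lambda)/\partial\lambda$ (for type A) and through the explicit formulas of Remark~\ref{rem:threshold-policy} (for type B), and then verify the sign by direct algebra. Your write-up is in fact somewhat more careful than the paper's --- the reduction to $\beta\bigl(\partial_\lambda V_{\beta}(g_1(\pi),\lambda)-\partial_\lambda V_{\beta}(g_0(\pi),\lambda)\bigr)\le 1$, the correct handling of the region $\pi<\pi_T$ (where the paper's formula for $V_{0,\beta}$ degenerates), the boundary regimes in $\lambda$, and the absolute-continuity gluing across jumps of $K$ --- at the mild cost of concluding ``non-increasing'' rather than strictly ``decreasing,'' which is all the indexability argument actually needs.
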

The proof is given in Appendix~\ref{proof:struct-results-3}.
We  first define indexability  and later show that type-A and type-B arms are indexable.
 Define,
\begin{eqnarray*}
  \mathcal{P}(\lambda) &:= & \left\{ \pi \in[0,1]:
  V_{1,\beta}(\pi,\lambda) \leq V_{0,\beta}(\pi,\lambda) \right\}
\end{eqnarray*}
$\mathcal{P}(\lambda)$ is a set of all $\pi$ for which the optimal
action is to not play the arm.
\begin{definition}[Whittle indexability, \cite{Whittle88}]
  An arm is Whittle indexable if $\mathcal{P}(\lambda)$ monotonically
  increases from $\emptyset$ to the entire state space $[0,1]$ as
  $\lambda$ increases from $-\infty$ to $\infty$, i.e.,
  $\mathcal{P}(\lambda_1) \setminus \mathcal{P}(\lambda_2) =
  \emptyset$ whenever $\lambda_1 < \lambda_2$. Further, a multi-armed
  bandit with $N$ arms is indexable if all arms are indexable.
  \label{def:indexable}
\end{definition}
We require the following result from \cite[Lemma $4$]{Meshram16a} to
prove Whittle indexability.
\begin{lemma}
  Let $\pi_T(\lambda) = \inf\{0 \leq \pi \leq 1: V_{S,\beta}(\pi,
  \lambda) = V_{NS,\beta}(\pi, \lambda) \} \in [0,1]$. If 
  \begin{equation}
    \frac{\partial V_{1,\beta}(\pi,\lambda)}{\partial \lambda} \bigg 
    \rvert_{\pi = \pi_T(\lambda)} \ < \
    \frac{\partial V_{0,\beta}(\pi,\lambda)}{\partial \lambda} \bigg 
    \rvert_{\pi=\pi_T(\lambda)},
    \label{eq:grad-VS-less-grad-VNS1}
  \end{equation}
  then $\pi_T(\lambda)$ is a monotonically decreasing function of
  $\lambda.$
  \label{lemma:indexability}
\end{lemma}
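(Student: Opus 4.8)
The plan is to reduce the statement to the sign of the derivative of the crossing curve. Write $g(\pi,\lambda):=V_{1,\beta}(\pi,\lambda)-V_{0,\beta}(\pi,\lambda)$, where $V_{S,\beta}\equiv V_{1,\beta}$ (serve/play, $a=1$) and $V_{NS,\beta}\equiv V_{0,\beta}$ (do not serve, $a=0$). By Lemma~\ref{lemma:v-pi-monotone}, $g(\cdot,\lambda)$ is non-increasing in $\pi$, and by Theorem~\ref{thm:single-threshold-type-A} it crosses $0$ for $\lambda\in[\lambda_{L,\beta},\lambda_{H,\beta}]$; hence $\pi_T(\lambda)=\inf\{\pi\in[0,1]:g(\pi,\lambda)\le 0\}$ satisfies $g(\pi_T(\lambda),\lambda)=0$ and, by monotonicity in $\pi$, $g(\pi,\lambda)\le 0\iff \pi\ge \pi_T(\lambda)$. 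For $\lambda\notin[\lambda_{L,\beta},\lambda_{H,\beta}]$ we have $\pi_T(\lambda)\in\{0,1\}$ by Lemma~\ref{lemma:struct-results-1}(3) and the monotonicity claim is immediate, so it suffices to treat the case $\pi_T(\lambda)\in(0,1)$. Morally, differentiating the identity $g(\pi_T(\lambda),\lambda)\equiv 0$ gives
\[
\pi_T'(\lambda)=-\left.\frac{\partial g/\partial\lambda}{\partial g/\partial\pi}\right|_{\pi=\pi_T(\lambda)},
\]
whose numerator is strictly negative — this is exactly hypothesis \eqref{eq:grad-VS-less-grad-VNS1} — and whose denominator is $\le 0$ by Lemma~\ref{lemma:v-pi-monotone}, so $\pi_T'(\lambda)\le 0$. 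Since the value functions are only convex, not $C^1$, I would not invoke the implicit function theorem but argue by monotone comparison, using right derivatives (which exist by convexity in $\lambda$, Lemma~\ref{lemma:struct-results-2} and the Remark).

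For the rigorous step, fix $\lambda$ with $\pi_T(\lambda)\in(0,1)$ and hold $\pi=\pi_T(\lambda)$ fixed. Expanding each of $V_{1,\beta}$ and $V_{0,\beta}$ to first order in $\lambda$ with their right $\lambda$-derivatives and subtracting, the map $\delta\mapsto g(\pi_T(\lambda),\lambda+\delta)$ has right derivative at $\delta=0$ equal to $\big(\partial V_{1,\beta}/\partial\lambda-\partial V_{0,\beta}/\partial\lambda\big)\big|_{(\pi_T(\lambda),\lambda)}$, which is strictly negative by \eqref{eq:grad-VS-less-grad-VNS1}. Hence $g(\pi_T(\lambda),\lambda+\delta)<g(\pi_T(\lambda),\lambda)=0$ for all sufficiently small $\delta>0$. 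Because $g(\cdot,\lambda+\delta)$ is continuous (indeed Lipschitz) in $\pi$ by Lemma~\ref{lemma:struct-results-2}, the strict inequality $g(\pi_T(\lambda),\lambda+\delta)<0$ persists on a left-neighbourhood of $\pi_T(\lambda)$, and since $g(\cdot,\lambda+\delta)$ is non-increasing with $\pi_T(\lambda+\delta)=\inf\{\pi:g(\pi,\lambda+\delta)\le 0\}$, this forces $\pi_T(\lambda+\delta)<\pi_T(\lambda)$. Thus $\pi_T$ is strictly decreasing in a right-neighbourhood of every interior point $\lambda$.

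To globalise, combine this with the weak monotonicity that follows from Lemma~\ref{lemma:struct-result-3}: for $\lambda_1<\lambda_2$ we get $g(\pi_T(\lambda_1),\lambda_2)\le g(\pi_T(\lambda_1),\lambda_1)=0$, so by the characterisation of $\pi_T(\lambda_2)$ we have $\pi_T(\lambda_1)\ge \pi_T(\lambda_2)$; together with the local strict-decrease (applied at $\lambda_1$) this upgrades to $\pi_T(\lambda_1)>\pi_T(\lambda_2)$ whenever both lie in $(\lambda_{L,\beta},\lambda_{H,\beta})$, and attaching the endpoint values from Lemma~\ref{lemma:struct-results-1}(3) makes $\pi_T(\cdot)$ monotonically decreasing on all of $\mathbb{R}$. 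The main obstacle is the lack of smoothness: $V_{0,\beta}$ is merely convex in $\pi$ and $\lambda$, and for type-$A$ arms it carries the floor-function structure of \eqref{eq:value-fun-type-a}, so $\pi_T(\cdot)$ itself may be only piecewise smooth; every derivative manipulation must therefore be performed with one-sided derivatives (as in the Remark), and one must separately confirm that the crossing is genuinely attained and behaves like a threshold — which is precisely what Theorem~\ref{thm:single-threshold-type-A} and Lemmas~\ref{lemma:struct-results-1}--\ref{lemma:struct-result-3} are arranged to provide.
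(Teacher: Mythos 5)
Your argument is correct. Note that the paper itself does not prove this lemma --- it is imported verbatim from \cite[Lemma 4]{Meshram16a} --- so there is no in-text proof to compare against; what you have written is a sound, self-contained reconstruction of the standard argument that the citation stands in for: reduce to the sign of $\pi_T'(\lambda)$ via the crossing equation $g(\pi_T(\lambda),\lambda)=0$ for $g=V_{1,\beta}-V_{0,\beta}$, and replace the implicit-function-theorem step (unavailable for merely convex value functions) by a one-sided-derivative and monotone-comparison argument. Two small points worth tightening: (i) your heuristic quotient $-\frac{\partial g/\partial\lambda}{\partial g/\partial\pi}$ needs the denominator to be strictly negative, which Lemma~\ref{lemma:v-pi-monotone} does supply (it gives strict decrease in $\pi$), and in any case your rigorous step never divides, so this is cosmetic; (ii) the paper defines $\pi_T(\lambda)$ as the infimum over the \emph{equality} set $\{\pi: V_{1,\beta}=V_{0,\beta}\}$, whereas your characterisation $\pi_T(\lambda)=\inf\{\pi: g(\pi,\lambda)\le 0\}$ is the sublevel-set version; these coincide exactly when the crossing is attained, i.e.\ for $\lambda\in[\lambda_{L,\beta},\lambda_H]$ by Lemma~\ref{lemma:struct-results-1}(3), which is the regime where indexability is actually invoked in Theorem~\ref{thm:indexability}, and you handle the exterior $\lambda$ by the endpoint convention, which is fine.
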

We next present our second main result.

\begin{theorem}[Whittle indexable]
  The single-armed bandit is indexable for $0 < \beta< 1$ and
  $\lambda_{L,\beta} \leq \lambda \leq \lambda_H.$
  \label{thm:indexability}
\end{theorem}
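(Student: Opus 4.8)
The plan is to reduce Whittle indexability to the monotonicity of the play-threshold $\pi_T(\lambda)$, and then to establish that monotonicity through Lemma~\ref{lemma:indexability}, discharging its hypothesis with the closed-form value functions of Remark~\ref{rem:threshold-policy}. First I would record the two extreme regimes: for $\lambda<\lambda_{L,\beta}$ the optimal action is to play at every belief, so $\mathcal{P}(\lambda)=\emptyset$, and for $\lambda>\lambda_H$ the optimal action is never to play, so $\mathcal{P}(\lambda)=[0,1]$; both are exactly what the definitions of $\lambda_{L,\beta}$ and $\lambda_H$ in part~3 of Lemma~\ref{lemma:struct-results-1} assert. For $\lambda_{L,\beta}\le\lambda\le\lambda_H$, Theorem~\ref{thm:single-threshold-type-A} gives a single threshold $\pi_T(\lambda)$, and because $V_{1,\beta}(\pi,\lambda)-V_{0,\beta}(\pi,\lambda)$ is decreasing in $\pi$ (Lemma~\ref{lemma:v-pi-monotone}) we have $\mathcal{P}(\lambda)=[\pi_T(\lambda),1]$. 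Hence it suffices to prove $\pi_T(\cdot)$ is non-increasing on $[\lambda_{L,\beta},\lambda_H]$; the inclusions $\mathcal{P}(\lambda_1)\subseteq\mathcal{P}(\lambda_2)$ for $\lambda_1<\lambda_2$ required by Definition~\ref{def:indexable} then follow by inspecting the finitely many cases according to which of the three regimes contains $\lambda_1$ and $\lambda_2$, the inclusions that straddle $\lambda_{L,\beta}$ or $\lambda_H$ being trivial.

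To prove $\pi_T(\cdot)$ non-increasing I would apply Lemma~\ref{lemma:indexability}, which reduces the task, at each interior $\lambda$ where $\pi_T(\lambda)\in(0,1)$, to the strict inequality between $\frac{\partial V_{1,\beta}}{\partial\lambda}$ and $\frac{\partial V_{0,\beta}}{\partial\lambda}$ evaluated at $\pi=\pi_T(\lambda)$ (the conclusion then extending to the endpoints by continuity). For a type-$B$ arm this is almost immediate from \eqref{eq:value-fun-type-b-piTnonz}: $V_{1,\beta}(\pi,\lambda)$ carries no $\lambda$, so its $\lambda$-derivative is $0$, whereas $\frac{\partial V_{0,\beta}}{\partial\lambda}$ equals $1$ when $T(\pi)<\pi_T$ and $\frac{1}{1-\beta}$ when $T(\pi)\ge\pi_T$, hence is strictly positive; the degenerate case $\pi_T=0$ is handled by the last pair of displays in Remark~\ref{rem:threshold-policy}, which give $\frac{\beta}{1-\beta}<\frac{1}{1-\beta}$. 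For a type-$A$ arm I would differentiate \eqref{eq:value-fun-type-a}. Writing $K_1:=K(1,\pi_T)$ from \eqref{eq:K-step-wait-normal}, the expression for $V_{0,\beta}(1,\lambda)$ is affine in $\lambda$ with slope $\frac{1-\beta^{K_1}}{(1-\beta^{K_1+1})(1-\beta)}$, so $V_{1,\beta}(\pi,\lambda)=\pi\rho_0+(1-\pi)\rho_1+\beta V_{0,\beta}(1,\lambda)$ has $\lambda$-derivative $\frac{\beta(1-\beta^{K_1})}{(1-\beta^{K_1+1})(1-\beta)}$, while evaluating the formula for $V_{0,\beta}(\pi,\lambda)$ at $\pi=\pi_T$, where $K(\pi_T,\pi_T)=1$, gives $\lambda$-derivative $1+\frac{\beta^2(1-\beta^{K_1})}{(1-\beta^{K_1+1})(1-\beta)}$. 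Subtracting,
\[
\left.\frac{\partial V_{0,\beta}}{\partial\lambda}\right|_{\pi=\pi_T}-\left.\frac{\partial V_{1,\beta}}{\partial\lambda}\right|_{\pi=\pi_T}=1-\beta\,\frac{1-\beta^{K_1}}{1-\beta^{K_1+1}}>1-\beta>0,
\]
using $\frac{1-\beta^{K_1}}{1-\beta^{K_1+1}}<1$ for $0<\beta<1$ and $K_1\ge1$ (which holds as $0<\pi_T<1$). This verifies the hypothesis of Lemma~\ref{lemma:indexability} for both arm types, so $\pi_T(\cdot)$ is non-increasing and indexability follows.

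The algebra of that last step is routine; the one point that genuinely needs care — and the step I expect to be the main obstacle — is the meaning of $\frac{\partial V_{0,\beta}}{\partial\lambda}$ at $\pi=\pi_T(\lambda)$, since the exponent $K(\pi,\pi_T(\lambda))$ itself depends on $\lambda$ through the threshold and $V_\beta(\pi,\lambda)$ need not be differentiable everywhere. As in the remark preceding Lemma~\ref{lemma:indexability}, one reads this as the right derivative, which exists by convexity of $V_\beta$ in $\lambda$ (part~2 of Lemma~\ref{lemma:struct-results-1}); and since $\lambda\mapsto K(\pi,\pi_T(\lambda))$ is locally constant off a discrete set of $\lambda$'s while $V_{0,\beta}$ stays continuous across those jumps, an envelope argument shows that the implicit dependence of $K$ on $\lambda$ contributes nothing to first order, which is what legitimises the computation above carried out with $K_1$ held fixed.
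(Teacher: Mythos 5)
Your proposal is correct and follows essentially the same route as the paper: reduce indexability to the monotonicity of $\pi_T(\lambda)$ via Lemma~\ref{lemma:indexability}, and discharge its gradient hypothesis \eqref{eq:grad-VS-less-grad-VNS1} by differentiating the closed-form value functions of Remark~\ref{rem:threshold-policy}. The derivative computation you carry out inline is precisely the content of the paper's Lemma~\ref{lemma:struct-result-3} and its proof in Appendix~\ref{proof:struct-results-3}; your additional care about the boundary regimes and the locally constant exponent $K(\pi,\pi_T(\lambda))$ only makes explicit what the paper leaves implicit.
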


\begin{proof}
  From Definition \ref{def:indexable}, it is clear that we have to
  show $\mathcal{P}(\lambda_1) \subseteq \mathcal{P}(\lambda_2)$
  whenever $\lambda_2 > \lambda_1.$ From Lemma
  \ref{lemma:struct-result-3}, we note that $V_{1,\beta}(\pi,\lambda)
  -V_{0,\beta}(\pi,\lambda)$ is decreasing in $\lambda$ for fixed
  $\pi$ and $\beta.$ Therefore, \eqref{eq:grad-VS-less-grad-VNS1}
  holds true. Using Lemma \ref{lemma:indexability}, $\lambda_2 >
  \lambda_1$ implies $\pi_T(\lambda_2) < \pi_T(\lambda_1)$ for fixed
  $\beta.$ Hence, from the definition of the set
  $\mathcal{P}(\lambda),$ we get $\mathcal{P}(\lambda_1) \subseteq
  \mathcal{P}(\lambda_2)$ whenever $\lambda_2 > \lambda_1.$ This
  completes the proof.
  \qed
\end{proof}
We are now ready to define the Whittle index for an arm and provide an
explicit formula for Whittle index in case of both type $A$ and type $B$ arms.
\begin{definition}[Whittle index]
  If an arm is indexable and is in state $\pi,$ then its Whittle
  index, $W(\pi),$ is
  \begin{eqnarray*}
    W(\pi) &:=& \inf_{\lambda}\{\lambda: V_{1,\beta}(\pi,\lambda) =
    V_{0,\beta}(\pi,\lambda) \}.
  \end{eqnarray*}
  \label{def:whittleind}
\end{definition}
$W(\pi)$ is the minimum subsidy $\lambda$ such that the optimal action
is to not play the are at the given $\pi.$ To compute the Whittle
index, we have to obtain the expressions of $V_{1,\beta}(\pi,\lambda)$
and $V_{0,\beta}(\pi,\lambda),$ equate them and solve it for $\lambda.$
After simplification, the Whittle index for type-A arm  is as follows.
  {\small{
  \begin{eqnarray}
    W(\pi) = 
    \rho_1 + \beta^{(K+1)} (\rho_0-\rho_1)(1-p)^{K} + 
     \frac{(\rho_0-\rho_1)}{(1-\beta)} 
    \left( 1 -\beta^{(K+1)} \right) \times 
    \left[ (1-\beta(1-p)) \pi\right]. \nonumber \\
    \label{eq:whittle-index-A}
  \end{eqnarray} 
  }}
  Here, $K= K(1,\pi)$ is waiting time before playing  that arm again.  
   Similarly, we can obtain the Whittle index formula for type-B arm, and it is
\begin{eqnarray}
W(\pi) &=&
\begin{cases}
 \rho_1 + (1-\beta) (\rho_0 - \rho_1) \pi & \mbox{if $\pi \in (0,1],$} \nonumber \\
 \rho_1 + (\rho_0 - \rho_1) \pi & \mbox{if $\pi = 0.$} \nonumber 
 \end{cases} 
 \\
 \label{eq:whittle-index-B}
\end{eqnarray}
\begin{remark}
  Note that the Whittle index of an arm depends on model parameters
  $p,\rho_0,\rho_1,$ discount parameter $\beta,$ and belief $\pi.$
\end{remark}
%
%\revision{
\subsection{Average reward problem}
\label{subsec:avg-whittle}
We rewrite the average reward problem~\eqref{eq:RMAB-valfn-avg}
in the view of subsidy $\lambda$ as follows.
\begin{eqnarray}
V_{T, \phi} (\pi) &:=& \expect{\sum_{t=1}^{T} A^{\phi}_t  \left( \pi_t \rho_0
+ (1-\pi_t) \rho_1 \right) + \lambda (1- A_t^{\phi})}, \nonumber \\ 
V_{\phi}(\pi) &=& \max_{\phi} \lim_{T \rightarrow \infty} \frac{1}{T}V_{T, \phi}(\pi). 
\label{eq:avg-reward-single-subsidy}
\end{eqnarray}
Here, $A^{\phi}(t) = i$ if $\phi(t) =i,$ $i \in \{0,1\},$ $\pi_1 = \pi.$
It is solved by the vanishing discount approach \cite{Avrachenkov15,Ross93}---by
first considering a discounted reward system and then taking limits as
the discount approaches to $1.$  
Define 
\begin{eqnarray*}
\overline{V}_{\beta}(\pi, \lambda) :=
\begin{cases}
 V_{\beta}(\pi, \lambda) - V_{\beta}(1,\lambda) & \mbox{for type-A arm,} \\
 V_{\beta}(\pi, \lambda) - V_{\beta}(0,\lambda) & \mbox{for type-B arm,}
\end{cases}
\end{eqnarray*}
for $\pi \in [0,1].$ Using Eqn.~\eqref{eq:dynamic_prog},  we can obtain for type-A arm 
\begin{eqnarray}
  \overline{V}_{\beta}(\pi, \lambda) + (1- \beta)V_{\beta}(1,\lambda)   
  &=&  \max \left\{ 
    \lambda + \beta \overline{V}_{\beta}((1-p)\pi, \lambda),   \pi \rho_0 + (1-\pi) \rho_1 \right\}, 
    \nonumber  \\
\label{eqn:V-bar-of-beta-1}
\end{eqnarray}
and for type-B arm
\begin{eqnarray}
  \overline{V}_{\beta}(\pi, \lambda) + (1- \beta)V_{\beta}(0,\lambda)  
   &=&  \max \left\{ 
    \lambda + \beta \overline{V}_{\beta}(\pi + p (1-\pi),\lambda),   \pi \rho_0 + (1-\pi) \rho_1 \right\}.
    \nonumber \\
\label{eqn:V-bar-of-beta-2}
\end{eqnarray}
From Lemma~\ref{lemma:struct-results-1},  $\overline{V}_{\beta}(\pi, \lambda)$
is a convex monotone function in $\pi$  for fixed  $\lambda$ and $\beta.$ 
By definition of 
$\overline{V}_{\beta} (\pi, \lambda)$ we have $\overline{V}_{\beta}(1, \lambda) = 0$ for type-A and $\overline{V}_{\beta}(0, \lambda) = 0$ for type-B arm.  Further, from  
Lemma~\ref{lemma:struct-results-2},  we know that
there is a constant $C < \infty$ such that
$\big\vert V_{\beta}(\pi, \lambda) - V_{\beta}(1,\lambda) \big\vert < C$
for fixed $\beta$ and  $\lambda \in [-\rho_1, \rho_1].$
 This implies
that $\overline{V}_{\beta}(\pi, \lambda)$ is bounded and Lipschitz-continuous.
Finally, $(1-\beta)V_{\beta}(\pi,\lambda)$ is also bounded.  Hence we can
apply the Arzela-Ascoli theorem \cite{rudin-principles}, to find a
subsequence $(\overline{V}_{\beta_k}(\pi,\lambda), (1-\beta_k)V_{\beta_k}(\pi,\lambda))$
that converges uniformly to $(V(\pi,\lambda),g)$ as $\beta_k \rightarrow 1.$
Thus, as $\beta_k \rightarrow 1,$ along an appropriate subsequence,
\eqref{eqn:V-bar-of-beta-1}   reduces to
\begin{eqnarray}
  V(\pi,\lambda) + g = \max \left\{\lambda+ V((1-p)\pi,\lambda), \pi \rho_0 + (1-\pi) \rho_1 \right\}, 
  \label{eq:dynamic-prog-avgc-A}
\end{eqnarray}
and~\eqref{eqn:V-bar-of-beta-2} reduces to
\begin{eqnarray}
  V(\pi,\lambda) + g = \max \left\{\lambda+ V(\pi + p (1-\pi),\lambda), \pi \rho_0 + (1-\pi) \rho_1 \right\}.
  \label{eq:dynamic-prog-avgc-B}
\end{eqnarray}
 Equation~\eqref{eq:dynamic-prog-avgc-A} and \eqref{eq:dynamic-prog-avgc-B} are the dynamic programming equations for type-A and type-B arm in case of average reward system.
Hence it is the optimal solution of \eqref{eq:avg-reward-single-subsidy}. \\
Also, note that $V(\pi,\lambda)$ inherits the structural properties of 
$V_{\beta}(\pi, \lambda).$ From Lemma~\ref{lemma:struct-results-1} and Theorem~\ref{thm:single-threshold-type-A}, we obtain next result. 
\begin{lemma}
\item 
\begin{enumerate}
\item For fixed $\lambda,$ $V(\pi, \lambda)$ is a monotone non-increasing and convex in $\pi.$ 
\item The optimal policy is  a single threshold type
  for $\lambda \in [\lambda_L, \lambda_H],$ where $\lambda_H = \rho_1$ 
  and 
  \begin{eqnarray*}
  \lambda_L &=&
  \begin{cases}
  \rho_0 +  p (\rho_ 0 - \rho_1) & \mbox{for type A,}\\
  \rho_1 &\mbox{for type B.}
  \end{cases}
\end{eqnarray*}   
 % $\lambda_L \leq \lambda  \leq \lambda_H.$
\end{enumerate}
\label{lemma:avg-threshld-special}
\end{lemma}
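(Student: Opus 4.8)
The plan is to obtain everything by passing the discounted-reward results through the vanishing-discount limit $\beta_k \to 1$ already established in the text. For part (1), recall that $\overline{V}_{\beta}(\pi,\lambda)$ is non-increasing and convex in $\pi$ (inherited from Lemma~\ref{lemma:struct-results-1}, since it differs from $V_{\beta}(\pi,\lambda)$ only by the additive constant $V_{\beta}(1,\lambda)$ or $V_{\beta}(0,\lambda)$), and that along the chosen subsequence $\overline{V}_{\beta_k}(\pi,\lambda) \to V(\pi,\lambda)$ uniformly in $\pi$. Since monotonicity and convexity are preserved under pointwise (a fortiori uniform) limits, $V(\pi,\lambda)$ is non-increasing and convex in $\pi$ for each fixed $\lambda$. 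This is the easy half.

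For part (2), I would first verify that the value of $\lambda_H$ persists in the limit. From Lemma~\ref{lemma:struct-results-1}(3), for $\lambda > \lambda_H = \rho_1$ the optimal discounted action is ``not play'' for all $\pi$ and all $\beta$; taking $\beta_k \to 1$ in \eqref{eq:dynamic-prog-avgc-A}/\eqref{eq:dynamic-prog-avgc-B} this property is retained, so $\lambda_H = \rho_1$ in the average-reward problem too. For $\lambda_L$, I would take $\beta \to 1$ in the discounted-case thresholds $\lambda_{L,\beta}$ from Lemma~\ref{lemma:struct-results-1}(3): for type B, $\lambda_{L,\beta} = \rho_1 + (1-\beta)(\rho_0-\rho_1) \to \rho_1$; for type A, $\lambda_{L,\beta} = \rho_0 + \beta q(\rho_0-\rho_1) \to \rho_0 + q(\rho_0 - \rho_1)$, where $q = 1-p$ — wait, this should be matched against the claimed $\rho_0 + p(\rho_0-\rho_1)$, so I would need to pin down whether the quantity denoted $q$ in Lemma~\ref{lemma:struct-results-1} equals $p$ or $1-p$ in this model, and reconcile the sign/direction. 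Then for $\lambda \in [\lambda_L, \lambda_H]$ I would invoke Theorem~\ref{thm:single-threshold-type-A}, which gives a single-threshold optimal policy in the discounted problem uniformly in $\beta$, and argue that the threshold $\pi_{T,\beta}$ has a convergent subsequence whose limit $\pi_T$ serves as the threshold for the limiting DP \eqref{eq:dynamic-prog-avgc-A}/\eqref{eq:dynamic-prog-avgc-B}; uniqueness of the threshold follows because in the limit $V$ is still the max of an affine function of $\pi$ (from the ``play'' branch $\pi\rho_0 + (1-\pi)\rho_1$) and a convex function (the ``not play'' branch), and $V_{1} - V_{0}$ remains monotone in $\pi$ by taking limits in Lemma~\ref{lemma:v-pi-monotone}.

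The main obstacle I anticipate is the careful bookkeeping around the limiting threshold: one must ensure the ``play'' and ``not-play'' branches in \eqref{eq:dynamic-prog-avgc-A}/\eqref{eq:dynamic-prog-avgc-B} still cross exactly once on $[0,1]$ for every $\lambda$ in the asserted range, and that the endpoints $\lambda_L, \lambda_H$ are exactly the values at which the crossing point exits the interval at $\pi = 1$ and $\pi = 0$ respectively. This requires writing out the limiting value-function expressions (the $\beta \to 1$ analogues of \eqref{eq:value-fun-type-a} and \eqref{eq:value-fun-type-b-piTnonz}, e.g. for type B the ``not play forever'' branch gives $V_0 = \lambda/(1-\beta)$, whose normalized version $\overline V_0$ and gain $g$ are finite in the limit) and checking the boundary behaviour directly; the convexity/monotonicity from part (1) then guarantees the crossing is unique. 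Everything else is routine limit-passing, so I would keep the write-up short and defer the detailed computation to an appendix, mirroring how Lemma~\ref{lemma:struct-results-1} was handled.
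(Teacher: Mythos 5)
Your proposal matches the paper's own (very terse) argument: the paper simply asserts that $V(\pi,\lambda)$ inherits the structural properties of $V_{\beta}(\pi,\lambda)$ through the Arzela--Ascoli / vanishing-discount limit and reads off $\lambda_L,\lambda_H$ by letting $\beta\to 1$ in Lemma~\ref{lemma:struct-results-1}(3) and Theorem~\ref{thm:single-threshold-type-A}, which is exactly your route (indeed you supply more of the limit-passing detail than the paper does). The $q$-versus-$p$ discrepancy you flag is real---the stated $\lambda_L=\rho_0+p(\rho_0-\rho_1)$ for type A is consistent with the discounted formula $\rho_0+\beta q(\rho_0-\rho_1)$ only if the undefined symbol $q$ there is read as $p$---but that is a notational issue in the paper, not a gap in your argument.
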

This in turn leads us to the following theorem which is a direct analog of
Theorem~6.17 in \cite{Ross93}.
\begin{theorem}
  If there exists a bounded function $V(\pi,\lambda)$ for $\pi \in [0,1],$ 
  $\lambda \in [\lambda_L, \lambda_H]$ and
  a constant $g$ that satisfies equation~\eqref{eq:dynamic-prog-avgc-A},
   \eqref{eq:dynamic-prog-avgc-B} then
  there exists a stationary policy $\phi^*$ such that
  \begin{eqnarray}
    g &=& \max_{\phi}\lim_{T \rightarrow \infty} \frac{1}{T} V_{T,\phi}(\pi,\lambda) \\
    \phi^{*} &=& \arg \max_{\phi}\lim_{T \rightarrow \infty} \frac{1}{T} V_{T,\phi}(\pi,\lambda).
  \end{eqnarray}
  for all $\pi \in [0,1],$ fixed $\lambda,$ and moreover, $\phi^*$ is the policy for
  which the RHS of \eqref{eq:dynamic-prog-avgc-A},\eqref{eq:dynamic-prog-avgc-B} is maximized.
  \label{thm:avg-reward-opt-policy}
\end{theorem}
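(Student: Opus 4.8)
The plan is to prove this as a verification (sufficiency) theorem for the average-reward optimality equation, in exact parallel with Theorem~6.17 of \cite{Ross93}. Assume a bounded $V(\pi,\lambda)$ and a scalar $g$ solve \eqref{eq:dynamic-prog-avgc-A} (the type-A case; the type-B case is handled verbatim, with the belief map $\pi\mapsto\pi+p(1-\pi)$ in place of $\pi\mapsto(1-p)\pi$). Write $r(\pi,1)=\pi\rho_0+(1-\pi)\rho_1$, $r(\pi,0)=\lambda$ for the one-step (expected) rewards and $f_1,f_0$ for the two deterministic belief updates, so that \eqref{eq:dynamic-prog-avgc-A} reads $V(\pi,\lambda)+g=\max_{a\in\{0,1\}}\{r(\pi,a)+V(f_a(\pi),\lambda)\}$. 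First I would define $\phi^*$ to be the stationary policy that, in belief state $\pi$, plays the action attaining this maximum; by Lemma~\ref{lemma:avg-threshld-special} the maximiser is of single-threshold form, so $\phi^*$ is a bona fide (measurable) stationary policy and the maximum is genuinely attained.

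The heart of the argument is a pathwise telescoping estimate. Fix any admissible policy $\phi$ and let $(\pi_t,A_t)$ be the induced belief/action process; since the belief is updated by the deterministic rule $\pi_{t+1}=f_{A_t}(\pi_t)$ and the step-reward in \eqref{eq:avg-reward-single-subsidy} is already its conditional expectation $r(\pi_t,A_t)$, the optimality equation gives, for every $t$ and every sample path, $r(\pi_t,A_t)+V(\pi_{t+1},\lambda)\leq V(\pi_t,\lambda)+g$, with equality when $\phi=\phi^*$. Summing over $t=1,\dots,T$, taking expectations, and cancelling the overlapping $V$-terms yields $\mathbb{E}\big[\sum_{t=1}^T r(\pi_t,A_t)\big]\leq Tg+V(\pi,\lambda)-\mathbb{E}[V(\pi_{T+1},\lambda)]$; that is, $V_{T,\phi}(\pi,\lambda)\leq Tg+V(\pi,\lambda)-\mathbb{E}[V(\pi_{T+1},\lambda)]$. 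Dividing by $T$ and using the boundedness of $V$, the term $(V(\pi,\lambda)-\mathbb{E}[V(\pi_{T+1},\lambda)])/T$ vanishes as $T\to\infty$, so $\limsup_{T\to\infty}\tfrac1T V_{T,\phi}(\pi,\lambda)\leq g$ for every $\phi$, while the whole chain is an equality for $\phi^*$. Hence $g=\lim_{T\to\infty}\tfrac1T V_{T,\phi^*}(\pi,\lambda)=\max_\phi\lim_{T\to\infty}\tfrac1T V_{T,\phi}(\pi,\lambda)$ for each $\pi\in[0,1]$, which are the two displayed claims, and $\phi^*$ attains the maximum.

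I expect the telescoping and the limit to be routine; the only points needing care are structural. One is that the supremum in \eqref{eq:avg-reward-single-subsidy} ranges over general history-dependent policies, but $\pi_t$ is a sufficient statistic (the facts recorded just before \eqref{eq:dynamic_prog}, from \cite{Ross71,BertsekasV195,BertsekasV295}), so every such $\phi$ induces a well-defined belief process to which the pathwise inequality applies, and the candidate optimum $\phi^*$ is itself admissible. A second is attainment of the maximum defining $\phi^*$ on the continuum $[0,1]$ together with measurability of the resulting decision rule; both are free here because Lemma~\ref{lemma:avg-threshld-special} already identifies the optimal action as a threshold in $\pi$. The main (mild) obstacle is thus not any hard estimate but keeping the sufficient-statistic bookkeeping clean enough that the one-step optimality inequality is valid for every policy in the competition class.
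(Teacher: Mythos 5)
Your proposal is correct: the telescoping/verification argument you give (one-step optimality inequality along any admissible policy's belief trajectory, summing, and using boundedness of $V$ to kill the boundary terms, with equality throughout for the greedy stationary policy) is exactly the standard proof of the average-reward verification theorem. The paper supplies no proof of its own and simply invokes Theorem~6.17 of \cite{Ross93}, whose proof is precisely what you have reconstructed, so your approach coincides with the paper's intended one; your added care about $\limsup$ versus $\lim$ and about measurability of the threshold decision rule only tightens the statement.
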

We next  derive the Whittle index formula. 
From Lemma~\ref{lemma:indexability}, recall that  to  claim indexability, we have shown that $\pi_T(\lambda)$   is a monotonically decreasing in $\lambda$ for discounted reward case. Similarly, in average reward case, we require to show this claim. 
From Arzela-Ascoli theorem, we know that $V(\pi,\lambda)$ inherits the properties of $V_{\beta}(\pi, \lambda).$  
Thus, $\pi_T(\lambda)$ is a  monotonically decreasing in $\lambda$ for $\beta =1.$  Then we can show that an arm is indexable. Further, the index can be evaluated  by letting 
$\beta \rightarrow 1$ in the Whittle index formula of discounted case. 
Hence from equation~\eqref{eq:whittle-index-A} and  \eqref{eq:whittle-index-B}, we obtain  
the Whittle index formula.
\begin{itemize}
\item For type-A arm:
\begin{eqnarray}
W(\pi) = 
\rho_1 +  (\rho_0-\rho_1)(1-p)^{K} + 
K (\rho_0-\rho_1) 
 \left[ (1-(1-p)) \pi\right]. \nonumber \\ 
\label{eq:whittle-index-A-avg}
\end{eqnarray} 
Here, $K = K(1,\pi)$ is waiting time before playing  that arm again.  
%The derivation of \eqref{eq:whittle-index-typical} is given in Appendix.
\item For type-B arm: 
\begin{eqnarray}
W(\pi) &=& \rho_1.
%\begin{cases}
% \rho_1   & \mbox{if $\pi \in (0,1],$} \nonumber \\
% \rho_1 + (\rho_0 - \rho_1) \pi & \mbox{if $\pi = 0.$} \nonumber 
% \end{cases} 
% \\
 \label{eq:whittle-index-B-avg}
\end{eqnarray}
\end{itemize}
%}
%
\section{Numerical Results: Whittle index and Myopic Algorithm}
\label{sec:num-results}

\begin{algorithm}[bt]
  \caption{Whittle index algorithm for APCS with type $A$ and type $B$
    arms.}
  \begin{algorithmic}
    \STATE Input: $N$ arms, initial belief $\pi =
    \left[\pi(1), \cdots, \pi(N)\right],$ $\pi_1 = \pi,$
    \REPEAT
    \FOR{$n=1,\ldots,N$}
    \STATE Compute $W(\pi(n))$ 
    \ENDFOR
    \STATE \textbf{Evaluate}  $i = \arg\max_{ 1 \leq n \leq N} W(\pi(n)).$
    \STATE \textbf{Play} arm $i$ 
    \IF{ $i$ is Type A}
    \STATE $\pi_{t+1}(i) = 1$      
    \ENDIF
    \IF{$I$ is Type B}
    \STATE $\pi_{t+1}(i) = 0$
    \ENDIF
    \FOR{$n=1, \ldots, M,$ $n \neq i$}
    \STATE $\pi_{t+1}(n) = (1-p_n) \pi_t(n),$
    \ENDFOR
    \FOR{$n=M+1, \ldots, N,$ $n \neq i$}
    \STATE $ \pi_{t+1}(n) = (1-p_n) \pi_t(n) + p_n ,$ 
    \ENDFOR
    \UNTIL{forever}
 \end{algorithmic}
  \label{algo:Whittle-index-typical-viral}
\end{algorithm}

\begin{figure*}
  \centering
  \subfloat[]{ \includegraphics[width=0.6\columnwidth]{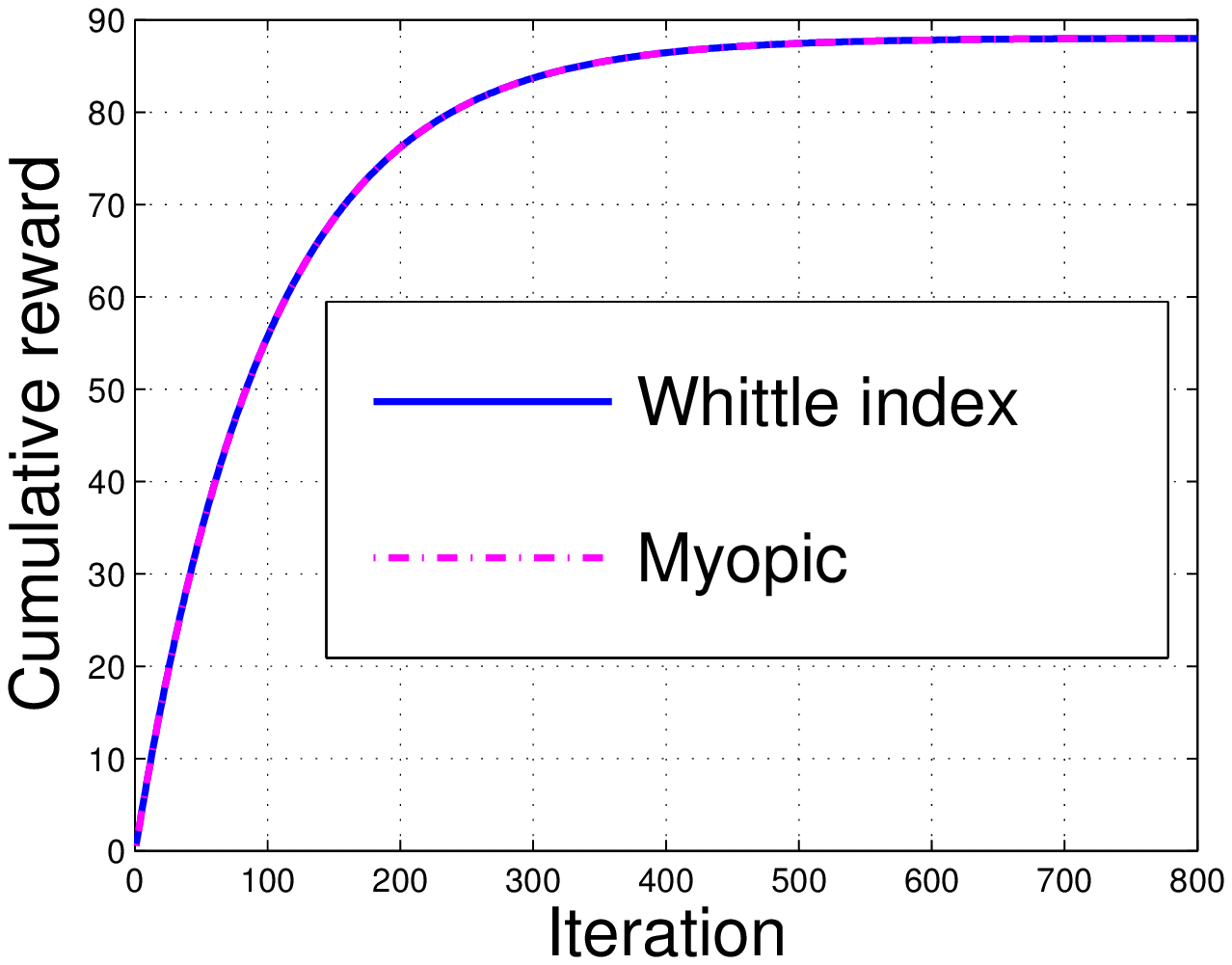}
  \label{Fig-CR-1-a}
  }
 \\
\subfloat[]{ \includegraphics[width=0.6\columnwidth]{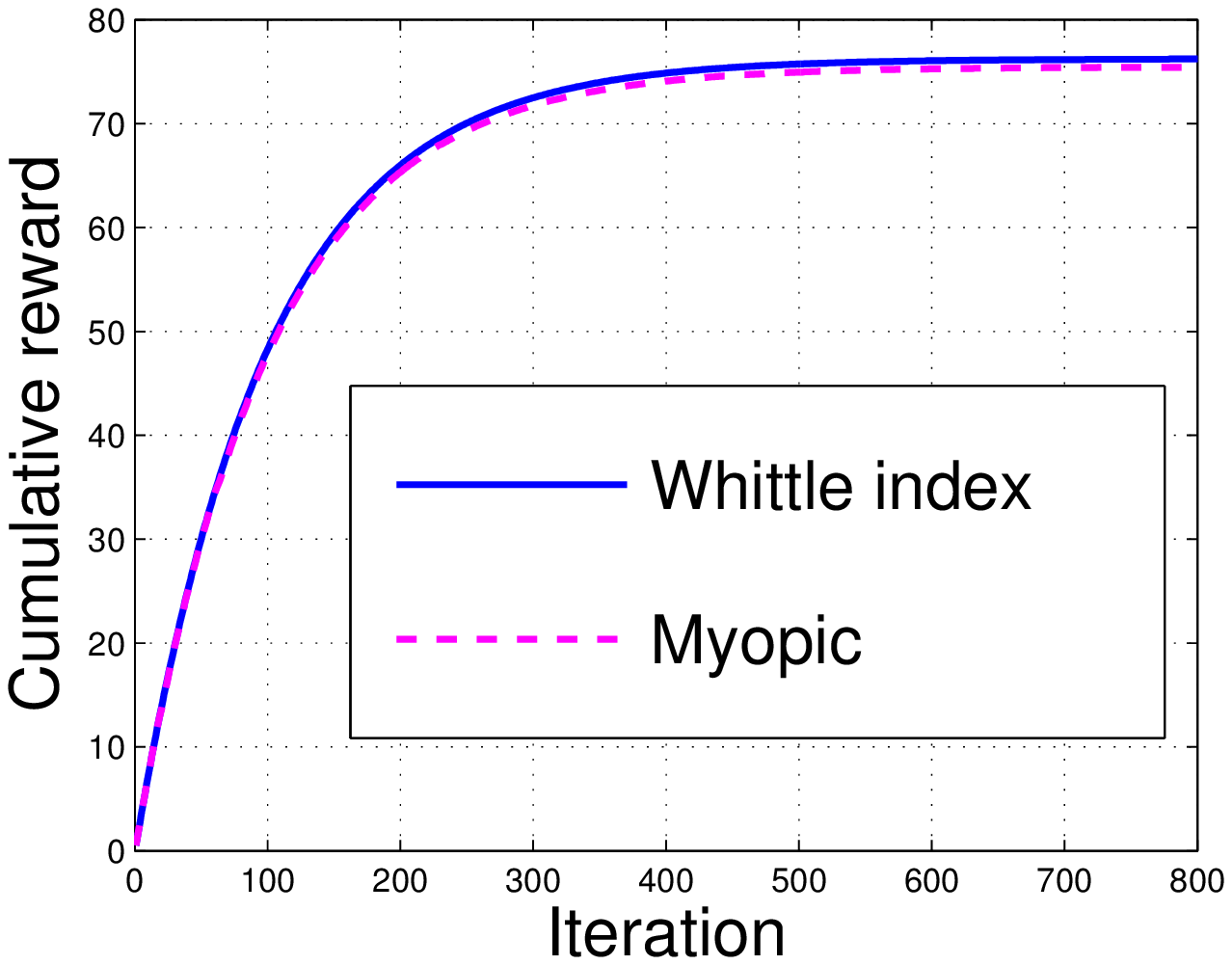}
  \label{Fig-CR-1-b} } 
  \\
\subfloat[]{ \includegraphics[width=0.6\columnwidth]{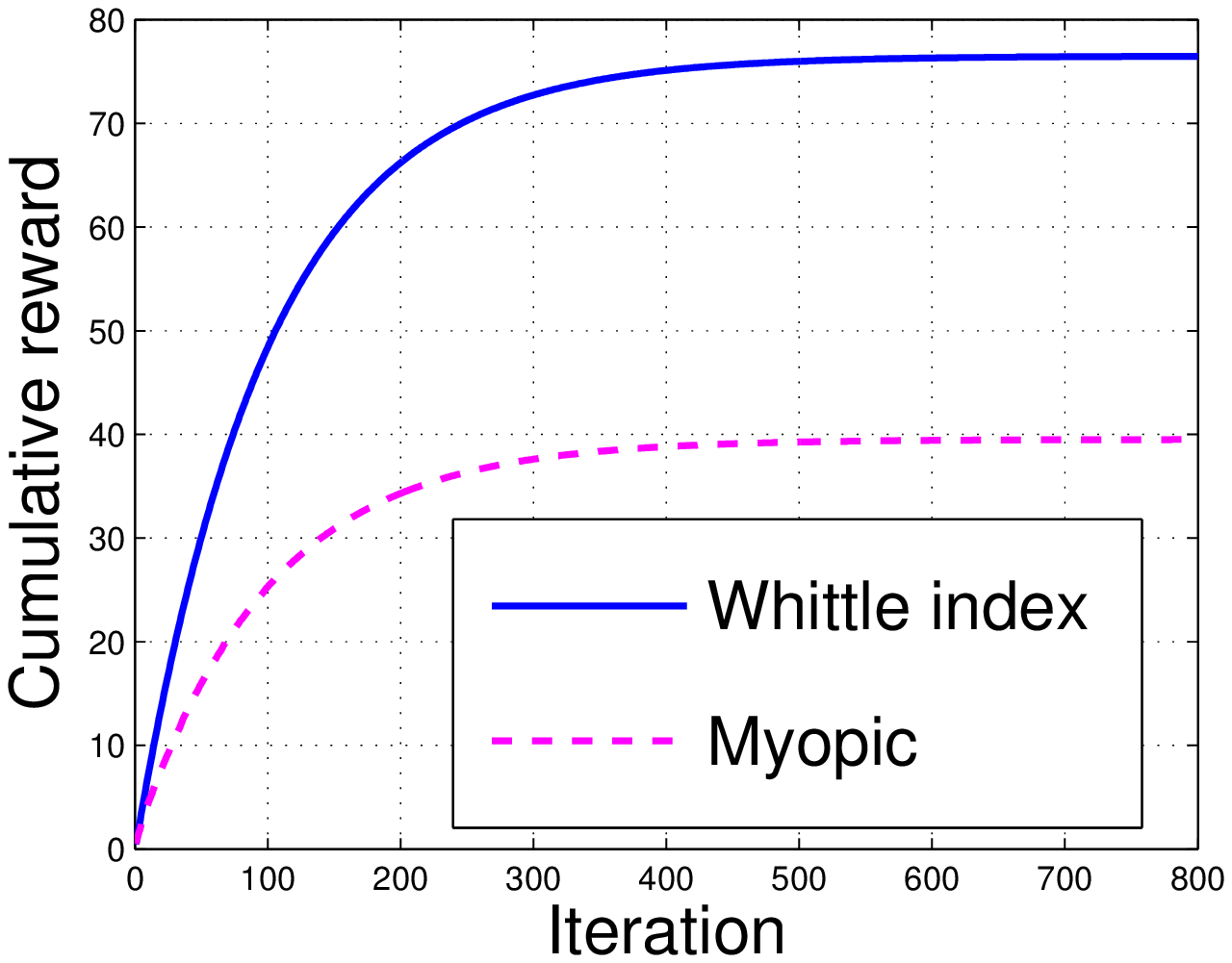}
  \label{Fig-CR-1-c} } 
\vspace{0.1in}

    Parameters

    \vspace{0.1in}
    
    \begin{tabular}{|c|c|c|c|}
      \hline
      & $\rho_0$  & $\rho_1$ & $p$ \\
      \hline 
      (a) & $[0.07, 0.04, 0.05, 0.12, 0.99]$ &
      $[0.71, 0.85, 0.77, 0.76, 0.88]$ &
      $[0.09, 0.23, 0.23, 0.12, 0.27]$\\
      \hline 
      (b) & $[0.02, 0.02, 0.11, 0.16, 0.19]$ &
      $[0.64, 0.77, 0.74, 0.60,0.76]$ &
      $[0.06, 0.24, 0.10, 0.16,0.15]$\\
      \hline
      (c) & $[0.07, 0.09, 0.01, 0.19,0.04]$ &
      $[0.63, 0.71, 0.66, 0.75,0.77]$ &
      $[0.29, 0.28, 0.03, 0.22,0.18]$\\
      \hline
    \end{tabular}

    \caption{Cumulative reward vs time for $N = 5$ with APCS having
      both types of arms.}
  \label{Fig-CR-1}
\end{figure*}

\begin{figure*}
\centering
  \subfloat[]{ \includegraphics[width=0.6\columnwidth]{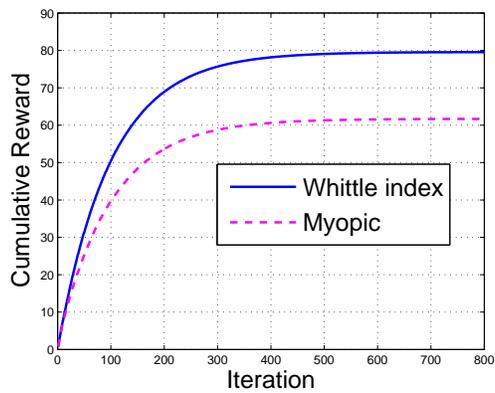}
  \label{Fig-CR-2-a}
  }
 \\
\subfloat[]{ \includegraphics[width=0.6\columnwidth]{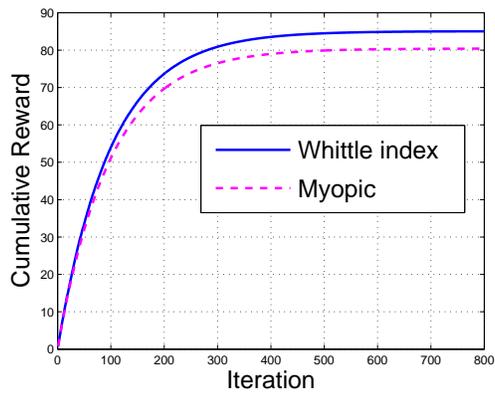}
  \label{Fig-CR-2-b} } 
  \\
\subfloat[]{ \includegraphics[width=0.6\columnwidth]{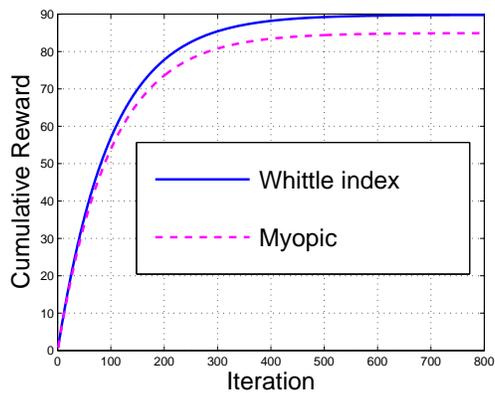}
  \label{Fig-CR-2-c} } 
  \caption{Cumulative reward vs time for different $N$ with APCS
    having both types of arms.}
  \label{Fig-CR-2}
\end{figure*}

\begin{figure*}
\centering
  \subfloat[]{ \includegraphics[width=0.6\columnwidth]{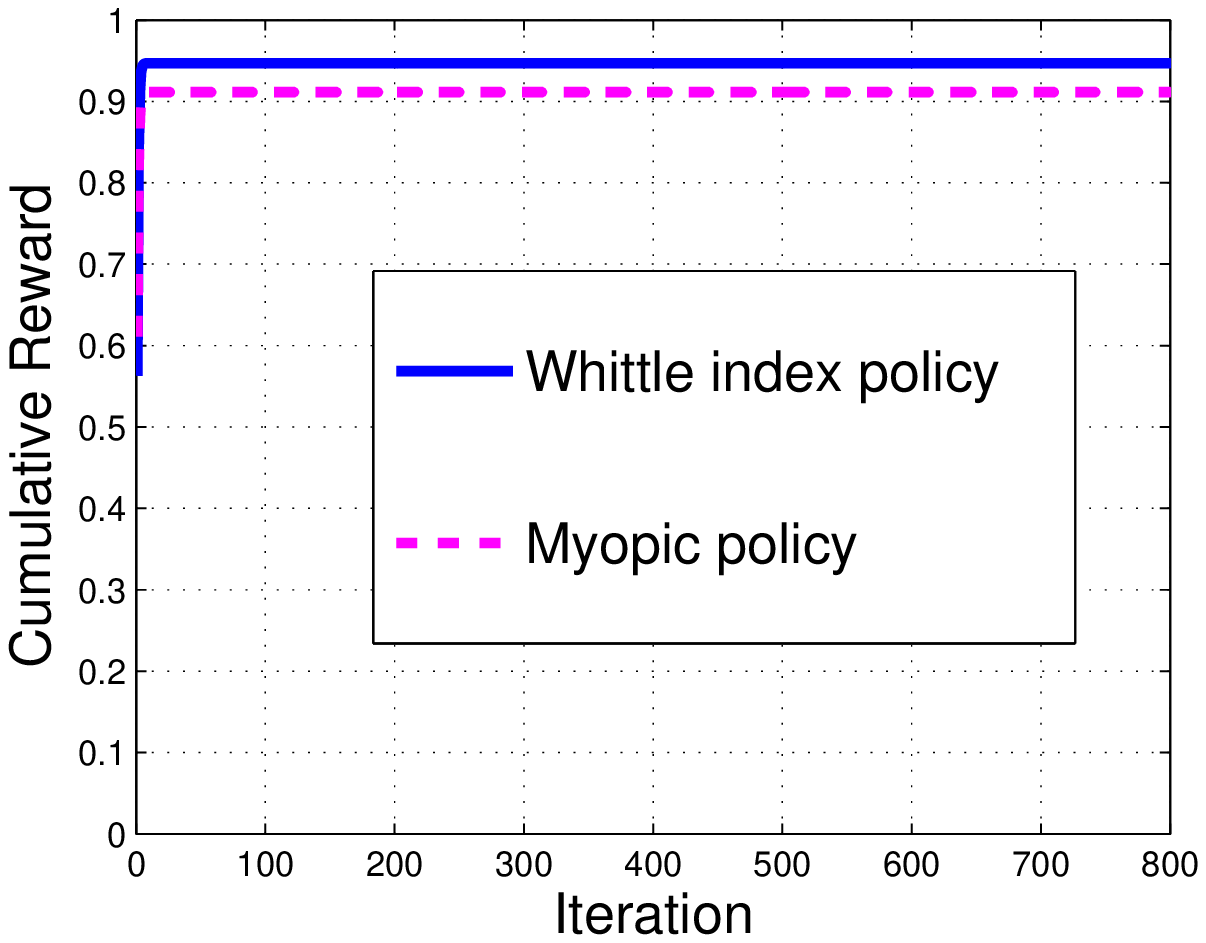}
%   \caption{$\beta = 0.3$}
  \label{Fig-CR-3-a} 
  }
 \\
\subfloat[]{ \includegraphics[width=0.6\columnwidth]{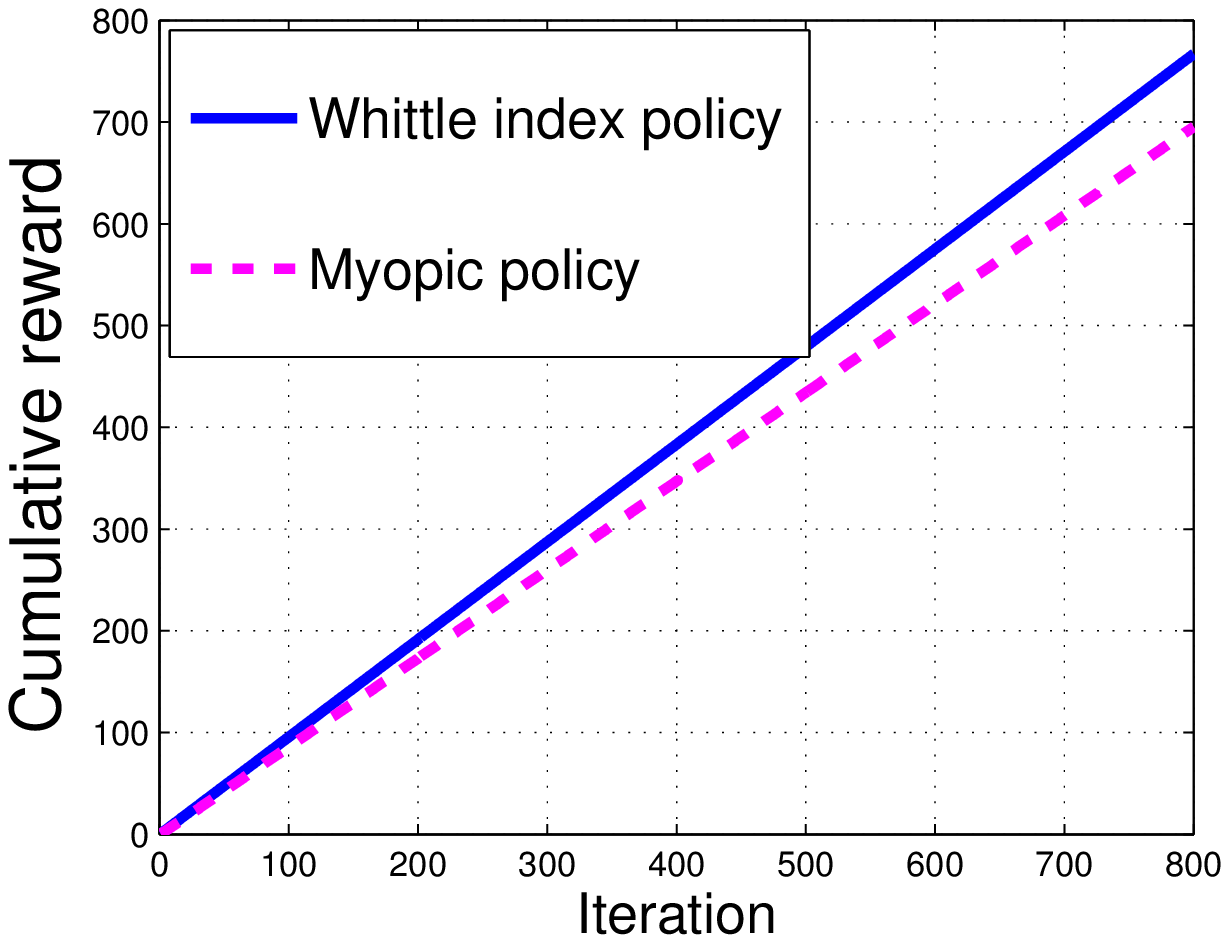}
  \label{Fig-CR-3-b}
  %\caption{$\beta = 1$} 
  } 
%  \\
%\subfloat[]{ \includegraphics[width=0.6\columnwidth]{Figures/1_N200.pdf}
%  \label{Fig-CR-2-c} } 
  \caption{Cumulative reward vs time for $\beta =0.3,1$ and $N =200$  with APCS
    having both types of arms.  Fig.~\ref{Fig-CR-3-a} $\beta = 0.3$ and 
    Fig.~\ref{Fig-CR-3-b} $\beta =1.$ }
  \label{Fig-CR-3}
\end{figure*}

In this section we present some numerical results to illustrate the
performance of the Whittle index based recommendation algorithm. The
simulations use Algorithm~\ref{algo:Whittle-index-typical-viral}. We
compare the performance of Whittle index based algorithm against that
of a myopic algorithm that plays the arm that has the highest expected
reward in the step. We consider small size ($N=5,10$), medium size
($N=20$), and large size ($N=100$) systems. For all the cases we use
$\beta =0.99.$

In Fig.~\ref{Fig-CR-1}, we present numerical examples when there are a
small number of arms, i.e., $N=5.$ In this case, arms 1--4 are of type
$A$ and arm $5$ is of type $B$ arm.  The system is simulated for
parameter sets which are also shown in the figure.  In all the cases,
the initial belief used is $\pi = [0.4, 0.4, 0.4, 0.4, 0.4].$ In the
first system $\rho_0$ and $\rho_1$ for the type $B$ arm are close to
one and both policies almost always choose that arm. Hence their
performances are also comparable. This is seen in
Fig~\ref{Fig-CR-1}(a). The behavior is similar even when the $\rho_1$s
of all the arms are comparable as in the second system with
performance shown in Fig.~\ref{Fig-CR-1}(b). In this system, in the
800 plays, the type $B$ arm was played 28 and 75 times in the Whittle
index and the myopic systems respectively. In the third small system,
the Whittle index system plays the type $B$ arms significantly more
frequently than the myopic system and has a significantly better
performance; this is shown in Fig.~\ref{Fig-CR-1}c.

Fig.~\ref{Fig-CR-2} shows the performance of the two systems for
larger systems. These are obtained as follows. For $N=10,$ we have
nine type $A$ arms and one type $B$ arm. For $N =50,$ we use $48$ type
$A$ arms and $2$ type $B$ arms. The system with $N =200,$ has $190$ type
$A$ arms and $10$ type $B$ arms. We generate reward and transition
probabilities randomly using the formula $\rho_0 = 0.01 +
0.19*\mathrm{rand}(1,N),$ $\rho_1 = 0.6 + 0.3*\mathrm{rand}(1,N),$ $p
= 0.01 +0.29*\mathrm{rand}(1,N).$ The initial belief $\pi =
0.4*\mathrm{ones}(1,N).$ We observe that the Whittle index algorithm
some gain the over myopic algorithm but the gain decreases with
increasing $N.$ The decrease is due because with large $N,$ the
waiting time for each item is large and this causes many arms to be in
state $1$ with high probability. 

%\revision{
In Fig.~\ref{Fig-CR-3}, we compare the performance of two systems for 
 different values of discount parameter $\beta = 0.3,1$ and $N =200.$ 
  We notice that even for small $\beta =0.3,$ 
 the Whittle index algorithm gains over myopic algorithm. 
%}

%\revision{
\section{Variants of type A and type B arms}
\label{subsec:discuss}
Here, we mention few extensions of a hidden RMAB.  
By considering different structure on transition probabilities, 
we can obtain different type of arms and models. 
\begin{itemize}
\item 
We consider  few variants of type A and type B arms that
 generalized our current model. The transition probabilities for this model 
 are as follows. 
\begin{eqnarray*}
P_{01}^n(0)= 
 p_n, \ \ P_{10}^n(0) = q_n  & \  \mbox{for both type arm,}
%P_{10}^n(0) = q_n & \  \mbox{for both type arm.} \nonumber \\
\end{eqnarray*}
\begin{eqnarray*}
P_{01}^{n}(1) = 
\begin{cases}
\epsilon P_{01}^{n}(0) & \ \mbox{if arm is type A,} \\
(1-\epsilon)  + \epsilon P_{01}^{n}(0) & \  \mbox{if arm is type B,} \\
\end{cases} \\
P_{10}^{n}(1) = 
\begin{cases}
(1-\epsilon) + \epsilon P_{10}^n(0) & \ \mbox{if arm is type A,} \\
\epsilon P_{10}^n(0) & \  \mbox{if arm is type B,} \\
\end{cases}
\end{eqnarray*}
for $\epsilon \in [0,1).$ This may be thought as the arm evolving with 
different speeds under actions $1$ and $0.$
This is referred to  dual speed restless bandit in 
\cite[Chapter $6,$ Secion $6.2$]{Gittins11}.
\item Notice that for $\epsilon = 0,$ current model  leads to a simple variant of type A and type B arm. 
For this model,  we can derive  all properties of value functions as in 
 Section~\ref{sec:whittle} using similar approach.  
Also, the Whittle index formula can be obtained. This is given in next subsection. 
%\item For $\epsilon =1,$ type A and type B arm becomes the same.
\item For $\epsilon \in (0,1),$  
difficulty level of problem  increases 
significantly because  the current belief about state, $\pi_{t+1}(n)$ becomes 
non linear function of previous belief $\pi_t(n)$ when arm $n$ is played.  
Thus, it is hard to show that the arm is Whittle-indexable. But 
the approximate Whittle-indexability is proved in \cite{Meshram16a} 
under restriction on discount parameter $\beta.$ 
\end{itemize}
%}
%
%\revision{
\subsection{A simple variant of the current model}
In this, we suppose $\epsilon = 0,$ and derive the value function expressions, and  Whittle index expression. To obtain these, we consider a single arm restless bandit. In such setting, we have transition probabilities as follows. $ P_{01}(0)= p,$  $P_{10}(0) = q  $ for both type arm, and   
\begin{eqnarray*}
P_{01}^{n}(1) = 
\begin{cases}
0 & \ \mbox{if arm is type A,} \\
1 & \  \mbox{if arm is type B,} \\
\end{cases} \\
P_{10}^{n}(1) = 
\begin{cases}
1 & \ \mbox{if arm is type A,} \\
0 & \  \mbox{if arm is type B.} \\
\end{cases}
\end{eqnarray*}
We first provide analysis for type A arm and then for type B arm.
\begin{enumerate}
\item Type A arm: 
The dynamic programming equation for discounted reward system  is 
\begin{eqnarray*}
V_{\beta}(\pi, \lambda) &=& \max \{V_{0,\beta}(\pi,\lambda), V_{1,\beta}(\pi,\lambda)\}, \\
V_{0,\beta}(\pi,\lambda) &=& \lambda + \beta V_{\beta} (\gamma(\pi),\lambda), \\
V_{1,\beta}(\pi,\lambda) &=& \pi \rho_0 + (1-\pi) \rho_1 + \beta V_{\beta}(1, \lambda)),
\end{eqnarray*}
where $\gamma(\pi) = \pi (1-p) + (1-\pi) q.$  
We also assume that $p + q \leq 1.$  Define $\gamma^{k}(\pi) := \gamma^{k-1}(\gamma(\pi)).$
Note that as $\lim_{k \rightarrow \infty}\gamma^{k}(\pi) = \gamma_{\infty},$
 where $\gamma_{\infty} = \frac{q}{q + p}.$ 
 Also observe that as $\pi > \gamma_{\infty},$ then $\gamma^{k}(\pi)$ is decreases to 
 $\gamma_{\infty}$ with $k$ and if $ \pi < \gamma_{\infty}$ then $\gamma^{k}(\pi)$ is increases to  $\gamma_{\infty}$ with $k.$ \\
%
%We can derive all properties of value functions as in Section~\ref{subsec:Whittle-discount}, also 
Mimicking the proof technique in Section~\ref{sec:whittle}, we can show that the optimal policy is of a threshold type and arm is Whittle indexable. 
Now using threshold policy result, we can   derive
the closed form  expressions for value functions. 
\begin{eqnarray*}
K(\pi,\pi_T) := \min \{ k \geq 0: \gamma^{k}(\pi) < \pi_T\}.
\end{eqnarray*}
Then 
\begin{eqnarray*}
K(\pi, \pi_T) = \nonumber 
\begin{cases}
0 \ \ \ \ \  \  \ \ \ \ \  \ \ \ \ \ \ \ \ \ \ \ \ \mbox{if $\pi<\pi_T,$ } \\
\left \lfloor {\frac{\log(\pi_T)}{\log(\gamma(\pi))}}  \right \rfloor+1 \ \ \ \ \  \mbox{ if $\pi \geq \pi_T$.}
\end{cases}
\end{eqnarray*}
To obtain value function expressions, we consider two cases.
\begin{itemize}
\item When $\pi_T \in [0, \gamma_{\infty}),$
 $K(1, \pi_T)  = \infty$ because $\gamma^{k}(\pi)$ is decreasing to $\gamma_{\infty}$ and this will never cross a threshold $\pi_T$ for finite $k.$
Also, from Theorem~\ref{thm:single-threshold-type-A}, we can have 
\begin{eqnarray*}
V_{0,\beta}(1, \lambda) &=& \lambda + \beta V_{0,\beta} (\gamma(1), \lambda) \\
&=& \lambda + \beta \lambda + \beta^2 V_{0,\beta} (\gamma^2(1), \lambda).
\end{eqnarray*}
And $\lim_{k \rightarrow \infty}\gamma^{k}(1) = \gamma_{\infty}.$  Expanding recursion 
of $V_{0,\beta}(\gamma^{k}(1), \lambda) $ we obtain 
\begin{eqnarray*}
V_{0,\beta}(1, \lambda) &=& \frac{\lambda}{1-\beta}. \\
V_{1,\beta}(\pi, \lambda) &=& \pi \rho_0 + (1-\pi) \rho_1 + \beta \frac{\lambda}{1-\beta}.
\end{eqnarray*}
Note that $\pi \geq \pi_T,$ $\gamma^{k}(\pi) \geq  \pi_T$ for any $k \geq 1.$ Thus
\begin{eqnarray*}
V_{0,\beta}(\pi, \lambda) &=&
\begin{cases}
\lambda + \beta \bigg( \gamma(\pi) \rho_0 + (1 - \gamma(\pi)) \rho_1\bigg) + \beta^2 \frac{\lambda}{1-\beta} & \mbox{if $\gamma(\pi) < \pi_T,$} \\
\frac{\lambda}{1-\beta} & \mbox{if $\gamma(\pi) \geq  \pi_T.$} 
\end{cases} 
\end{eqnarray*}
\item When $\pi_T \in [\gamma_{\infty}, 1],$ we have $K(1, \pi_T)  < \infty.$ 
Using a threshold policy result and after simplification we get 
\begin{eqnarray*}
V_{0,\beta}(1,\lambda) &=& \frac{\lambda \left( 1- \beta^{K(1, 
\pi_T)}\right) }{ \left( 1 - \beta^{(K(1, \pi_T) +1)} \right) (1-\beta) }
+ 
\beta^{K(1, \pi_T)}
\frac{\rho(\gamma^{K(1,\pi_T)} (1) )}{\left(1- \beta^{(K(1,\pi_T)+ 1)} \right)}, \\
V_{0,\beta}(\pi,\lambda) &=& \frac{\lambda \left( 1-\beta^{K(\pi, \pi_T)}
\right)}{(1-\beta)} + 
\beta^{K(\pi, \pi_T)} \rho\left(\gamma^{K(\pi,\pi_T)}(\pi) \right) 
+
\beta^{\left(K(\pi,\pi_T) +1\right)} V_{0,\beta}(1,\lambda) \\
V_{1, \beta}(\pi,\lambda) &=& \rho(\pi) + \beta V_{0,\beta} (1, \lambda).
\end{eqnarray*}
where $\rho(\pi) = \pi \rho_0 + (1-\pi) \rho_1.$
\end{itemize} 
We now derive expressions for the Whittle index.
When $\pi \in [0,\gamma_{\infty}),$ the Whittle index is 
\begin{equation*}
W(\pi) = \pi \rho_0 + (1-\pi) \rho_1.
\end{equation*}
When $\pi \in [\gamma_{\infty}, 1]$ the Whittle index is 
{\small{
\begin{eqnarray*}
W(\pi) = \rho_1 + \beta^{\left( K(1,\pi) + 1 \right)} 
(\rho_0 -\rho_1)  \gamma^{K(1,\pi)}(1) + 
\frac{(\rho_0 - \rho_1)}{(1-\beta)} 
\left(1-\beta^{(K(1,\pi) +1)}\right) \left(\pi - \beta \gamma(\pi) \right). 
\end{eqnarray*}
}}
Using the vanishing discounted approach, we can analyse average reward problem and for that we can obtain the Whittle index expression by letting discount parameter $\beta$  approach $1.$ Hence
\begin{eqnarray*}
W(\pi) &=&
\begin{cases} 
\pi \rho_0 + (1-\pi) \rho_1 & \mbox{if $\pi \in [0, \gamma_{\infty})$} \\
 \rho_1 + (\rho_0 -\rho_1)  \gamma^{K}(1) +
K (\rho_0 - \rho_1) \left(\pi - \beta \gamma(\pi) \right)
& \mbox{if $\pi \in [\gamma_{\infty},1] $}
\end{cases}
\end{eqnarray*}
where $K = K(1, \pi).$ \\
\item Type B arm: The dynamic programming equation for discounted reward  is as follows. 
\begin{eqnarray*}
V_{\beta}(\pi,\lambda) &=& \max\{ V_{1,\beta}(\pi,\lambda), V_{0,\beta}(\pi,\lambda)\}, \\ 
V_{1,\beta}(\pi,\lambda) &=& \pi \rho_0 + (1-\pi) \rho_1 + \beta V_{\beta}(0, \lambda), \\
V_{0,\beta}(\pi,\lambda) &=& \lambda + \beta V_{\beta}(\gamma(\pi), \lambda).
\end{eqnarray*}
We now obtain the value function expressions. 
For $\pi_T \in (0, 1],$ we can get  
\begin{eqnarray*}
V_{1,\beta}(0,\lambda) &=& \rho_1 + \beta V_{1,\beta}(0,\lambda) 
\end{eqnarray*}
Hence after simplification we have  
\begin{eqnarray*}
V_{1,\beta}(0,\lambda) &=& \frac{\rho_1}{1-\beta},\\
V_{1, \beta}(\pi,\lambda) 
%&=& \pi \rho_0 + (1-\pi) \rho_1 + \beta  V_{\beta}(0,\lambda) \\
%&=&\pi \rho_0 + (1-\pi) \rho_1 + \beta  V_{1,\beta}(0,\lambda), \\
&=& \pi \rho_0 + (1-\pi) \rho_1 + \beta  \frac{\rho_1}{1-\beta}.
\end{eqnarray*}
 If $\pi_T \in (0,\gamma_{\infty})$ then
 \begin{eqnarray*}
 V_{0,\beta}(\pi,\lambda) = 
 \begin{cases}
 \lambda + \beta \rho(\gamma(\pi)) + \beta^2 \frac{\rho_1}{1-\beta} 
 & \mbox{if $\pi < \pi_T,$} \\
 \frac{\lambda}{1-\beta} & \mbox{if $\pi \geq \pi_T.$}
 \end{cases}
\end{eqnarray*}  
If $\pi_T \in [\gamma_{\infty},1],$ then 
 \begin{eqnarray*}
 V_{0,\beta}(\pi,\lambda) = 
 \begin{cases}
\frac{\lambda(1-\beta^{K(\pi,\pi_T)})}{1-\beta} + \beta^{K(\pi,\pi_T)} 
\left( \rho(\gamma^{K(\pi,\pi_T)}(\pi)) + \beta \frac{\rho_1}{1-\beta} \right) 
 & \mbox{if $\pi < \pi_T,$} \\
\lambda + \beta \rho(\gamma(\pi)) + \beta^2 \frac{\lambda}{1-\beta}  
  & \mbox{if $\pi > \pi_T,$} \\
\frac{\lambda}{1-\beta} 
& \mbox{if $\pi = \pi_T.$}  
 \end{cases}
\end{eqnarray*}  
For $\pi_T =0$ we can obtain 
\begin{eqnarray*}
V_{1,\beta}(\pi,\lambda) &=& \pi \rho_0 + (1-\pi) \rho_1 + \beta \frac{\lambda}{1-\beta}, \\
V_{0,\beta}(\pi,\lambda) &= & \frac{\lambda}{1-\beta}.
\end{eqnarray*}
The Whittle index expression is given as.
\begin{eqnarray*}
W(\pi) &=&
\begin{cases}
 \rho_1 & \mbox{for $\pi = 0,$} \\
 (1-\beta) (\pi \rho_0 + (1-\pi) \rho_1) + \beta \rho_1
 & \mbox{for $\pi \in (0, 1].$} 
\end{cases}
\end{eqnarray*}
\end{enumerate}
For average reward, the Whittle index in this model is same as \eqref{eq:whittle-index-B-avg}.  
%}

\section{Thompson-Sampling Based Learning}
\label{sec:learning}

The key to a useful use of the model from the preceding sections is 
the knowledge of the parameters. These are not known a 
priori in most systems. In this section, we describe an algorithm that 
learns the parameters from the available feedback. Our scheme 
is a version of Thompson sampling \cite{Thompson} which has been 
 studied for stochastic multi-armed bandits 
 \cite{AgrawalG,GopManMan14:thompson}, learning in Markov decision
processes (MDPs) \cite{Gopalan15} and in POMDPs \cite{Meshram16c}.
In fact our algorithm is an extension of the scheme for the one-armed 
bandit, modeled as a POMDP, that was described and analysed
in \cite{Meshram16c}. An important requirement of the learning algorithm 
is to have a low regret, i.e., the exploration and exploitation sequences
should be cleverly mixed to ensure that the difference between the ideal
and the realised objective functions are small. In \cite{Meshram16c} we 
formally show that the regret is logarithmic for the one-armed case. The 
algorithm for the multi-armed case is described in Algorithm~\ref{algo:TS}, 
and we expect that its performance is also good. A formal analysis is 
being worked out.  

The algorithm proceeds as follows. At the beginning, we initialize a 
prior distribution on the space of all candidate parameters models, which in 
our case is a subset $\Theta_i$ of the unit cube $[0,1]^3;$ $\Theta_i$ 
contains all possible models for the parameters $\theta_i$ of arm $i.$  
In each step, assume that the true values of the parameters are 
$\theta_i$ and use the Whittle index (or the myopic) algorithm to 
choose the arm that is to be played.  Recall that the arm with highest
index is played in the Whittle index algorithm and the arm with
highest expected reward is played in myopic algorithm.  The playing 
of the arm $A_t$ at time $t$ yields a payoff $R_t.$ This is used to update
the prior distribution $Z_i$ of the parameter space for that arm. This
update is performed using Bayes' rule and the observed reward. The
model distribution for the other arms remain unchanged. We explain the 
update mechanism next.  Let $\mathcal{B}({\Theta}_i)$ denote the
Borel $\sigma$-algebra of $\Theta_i \subset [0,1]^3$ for the arms indexed 
by $i=1,\ldots,N.$  Let $\prob{R = r \given \theta, A}$ denote the
likelihood, under the model $\theta$, of observing a reward of $r \in
\{0,1\}$ upon action $A \in \{1,\cdots,N\}.$  This likelihood can be seen to be
as follows. 
\begin{eqnarray*}
  \prob{R= r \given \theta, A = i} = 
  \begin{cases}
    f(\theta_i, i)&  \mbox{if $r=1,$} \\
    1-f(\theta_i,i)& \mbox{if $r=0.$}
  \end{cases}
\end{eqnarray*}
Here $f(\theta_i,i)$ is the probability of observing a reward of $1$
after playing arm $i$ when the parameter is $\theta_i.$ Letting $k_i$ denote
 the number of time steps since the last time that arm $i$  was played, we 
 can obtain $ f(\theta_i, i)$ as follows.
 
%
%{\footnotesize{
    \begin{eqnarray*}
      f(\theta_i,i) = 
      \begin{cases}
        (1-p_i)^{k_i} \rho_{i,0} + (1-(1-p_i)^{k_i}) \rho_{i,1},&\mbox{if $i$ is type A arm, } \\
        (1-(1-p_i)^{k_i}) \rho_{i,0} + (1-p_i)^{k_i} \rho_{i,1}, &\mbox{if $i$ is type B arm.}
      \end{cases}
    \end{eqnarray*}
%}}
%

This likelihood is used to update the prior distribution $Z_{i,t}$ and the 
parameters of the arm is selected from this distribution.  We reiterate that the 
parameters and the prior distribution on these parameters, of the arms 
that are not  played remain unchanged. The states of the arms, $\pi_{i,t}$ 
are now  updated and the algorithm proceeds as before.  The details are 
described in Algorithm~\ref{algo:TS}.

\begin{algorithm}[bt]
  \caption{Thompson sampling algorithm}
  \begin{algorithmic}
     \STATE {\bf Input:} 
     Set of arms $\mathcal{N} = \{ 1, 2, \ldots, N\},$  Action space 
     $\mathcal{A} = \{ 1, 2, \ldots, N\}$,
     Observation space $\mathcal{R} = \{0,1\}$, 
     \FOR {$i=1, 2, \cdots, N$}
     \STATE Parameter space $\Theta_i \subseteq  [0,1]^3$, 
     \STATE Prior probability distribution $Z_{i,0}$ over $\Theta_i$
     \STATE {\bf Sample} parameter $\theta_{i,0} \in \Theta_j$ according to $Z_{i,0}$   
     \STATE $\pi_{i,0} = 1$
     \ENDFOR
     
    \FOR { $ t = 1, 2, \ldots$ }
     \STATE $i=$ Best arm determined by Whittle index based policy using 
     $\{\theta_{t-1}, \pi_{t-1} \}$
     \STATE   {\bf Action}  $A_t = i$
     \STATE $R_t$ = Reward from action $A_t$. 
     \STATE {\bf Update} $Z_{i,t}$to
    {\small{     
     \begin{eqnarray*}
       Z_{i,t}(B_i) := \frac{\int_{B_i} \prob{R = R_t \given \;
           \theta_{t-1}, A_{t}} Z_{i,t-1}(\theta_{i})
         d\theta_i}{\int_{\Theta_i} \prob{R = R_t \given \theta_{t-1},
           A_{t}} Z_{i,t-1}(\theta_i) d\theta_i }
     \end{eqnarray*}
     }}
    \STATE {\bf Sample} parameter $\theta_{i,t} \in \Theta_i$ according to $Z_{i,t}$ 
     \FOR {$j=1, 2, \cdots, N$}
     \STATE Update $\pi_{j,t}$
     \ENDFOR
     \ENDFOR
     
  \end{algorithmic}
  \label{algo:TS}
\end{algorithm}

\subsection{Numerical results}
\label{subsec:Num-result-TS}
\begin{figure*}
\centering
  \subfloat[]{ \includegraphics[width=0.6\columnwidth]{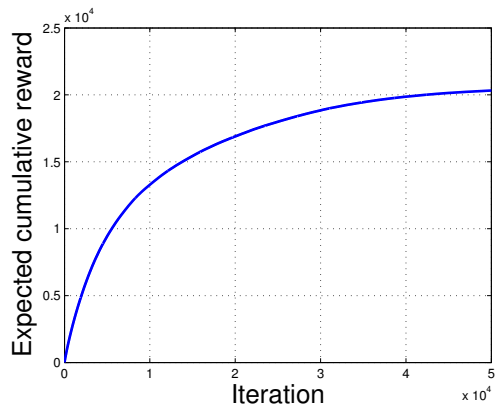}
  \label{Fig-R-a}
  }
 \\
\subfloat[]{ \includegraphics[width=0.6\columnwidth]{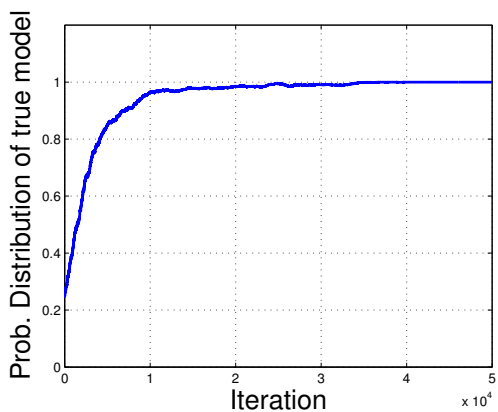}
  \label{Fig-PDR-a} } 
  \\
\subfloat[]{ \includegraphics[width=0.6\columnwidth]{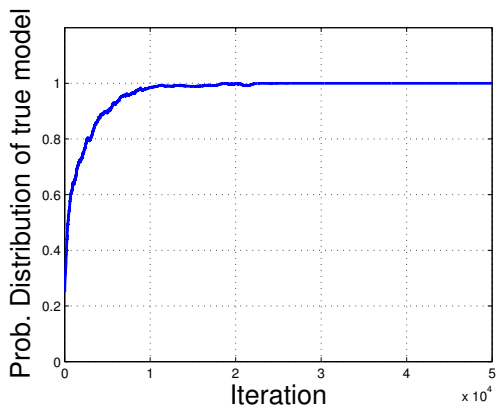}
  \label{Fig-PDR-b} } 
  \caption{Expected cumulative regret vs time horizon and probability
    mass on true model vs time horizon for both type of arms.}
  \label{Fig-TS}
\end{figure*}

We illustrate performance of Thompson sampling algorithm in
Fig.~\ref{Fig-TS} for $N =5,$ where $4$ type $A$ items and one 
type $B$ item. The true model parameters are $p = [0.15, 0.25, 0.25,
  0.15,0.15],$ $\rho_0=[0.2, 0.2, 0.1, 0.1, 0.1],$ and these are assumed 
unknown. We assume that $\rho_1 =0.7$ is known. We simulate it with
discrete parameter space into a $(2 \times 2)$ grid of $(0.15, 0.25,
0.1, 0.2).$ At the start of algorithm, we use uniform  prior over the 
$4$ points for all the arms.  We plot the expected cumulative
regret as function of time horizon, see Fig.~\ref{Fig-TS}-a. The
regret incurs whenever sample model different from true model.  In
Fig.~\ref{Fig-TS}-b and Fig.~\ref{Fig-TS}-c, we plot the probability
distribution on the true model against time for both types of items.  
We note that probability distribution of true
model is approaching to $1,$ in  Whittle index based 
algorithm. This suggests that  the Thompson sampling strategy indeed learns
the true model rather quickly. A more detailed analysis is being performed. 

%\section{Discussion and Conclusion}

\section{Conclusion}
In this paper we studied a restless multi-armed bandit for automated
playlist recommendation system with two types of items.
We considered infinite horizon discounted and average reward problem.
 We show that
both types arm are indexable and we derived the closed form expression
for the Whittle index derived from the state of the belief in the
state of the arms and from the model parameters.  Our numerical
results illustrate that the Whittle index algorithm can perform
 better than a myopic algorithm. 
We further discussed the dual speed restless bandit with hidden states and 
derived Whittle index expression for a variant.
 We have  proposed a
Thompson sampling based learning algorithm to learn the true model
parameters. Simulation results indicate that the learning is indeed
effective. The performance guarantees of the learning algorithm are
being investigated.

\bibliography{restless-bandits}

\appendix
\section{Appendix}
\subsection{Proof of Lemma \ref{lemma:struct-results-2}}
\label{proof:struct-results-2}
The proof is similar for type A and type B arm. It has minor variations due to
value function expressions. Here, we present the proof for type A arm. We omit the 
proof for type B arm. The proof is using induction techniques.

\begin{enumerate}
\item Let
  \begin{eqnarray}
    V_{\beta,1}(\pi,\lambda) &=& \max \{  \lambda, \pi \rho_0 +(1-\pi) \rho_1 \}
    \nonumber\\
    V_{\beta,n+1}(\pi,\lambda)  &=&  \max \left\{ \lambda + \beta
    V_{\beta,n}((1-p)\pi, \lambda), \right. \nonumber \\ & &\left. 
    \pi \rho_0 +(1-\pi) \rho_1 +   
    %\right. \nonumber \\ \left. 
    \beta V_{\beta,n}(1, \lambda) 
    \right\}  \nonumber \\
    \label{algo:iterAlgo_diff}
  \end{eqnarray}
The partial derivative of $V_{\beta,1}(\pi,\lambda)$ w.r.t. $\pi$ is
$0$ or $-(\rho_1-\rho_0),$ depending on $\pi,$ and $\lambda.$ Thus the
absolute value of slope of $V_{\beta,1}(\pi,\lambda)$ w.r.t. $\pi$ is
bounded above by $(\rho_1-\rho_0).$
Making the induction hypothesis that the absolute value of slope of
$V_{\beta,n}(\pi,\lambda)$ w.r.t. $\pi$ is bounded above by
$(\rho_1-\rho_0).$
We next want to show that the absolute value of slope of $V_{\beta,n+1}(\pi,\lambda)$ w.r.t. $\pi$ is bounded above by $(\rho_1-\rho_0).$

Note that derivative of the term $\lambda + \beta
    V_{\beta,n}((1-p)\pi, \lambda)$ w.r.t. $\pi$ is bounded by
$(\rho_0-\rho_1)$ because first term is constant and second term's 
derivative is bounded by $\beta (1-p) (\rho_0 - \rho_1),$ this is bounded 
by $(\rho_1 - \rho_0).$ 

Also, the absolute value of slope of $\pi \rho_0 +(1-\pi) \rho_1 +   
   \beta V_{\beta,n}(1, \lambda) $ w.r.t. $\pi$ is bounded 
   $(\rho_1 - \rho_0)$ because first term's slope is $(\rho_0 - \rho_1)$ 
   and second term is constant. 
 Hence the absolute value of slope of   $V_{\beta,n+1}(\pi,\lambda)$ w.r.t. $\pi$ is bounded above by $(\rho_1-\rho_0).$

 By induction, it is true
for all $n \geq 1.$ From \cite[Chapter $7$]{BertsekasV195},
\cite[Proposition $2.1,$ Chapter $2$]{BertsekasV295},
$V_{\beta,n}(\pi, \lambda) \to V_{\beta}(\pi, \lambda),$ uniformly.
Thus the absolute value of slope of $V_{\beta}(\pi,\lambda)$ w.r.t. $\pi$ is
bounded above by $(\rho_1-\rho_0).$

\item The partial derivative of $V_{\beta,1}(\pi,\lambda)$ in
  \eqref{algo:iterAlgo_diff} w.r.t.  $\lambda$ is $1$ or $0,$
  depending on $\pi,$ and $\lambda.$ Thus $\frac{\partial
    V_{\beta,1}(\pi,\lambda) }{\partial \lambda}< \frac{1}{1-\beta}$
  for $0<\beta <1.$ By induction hypothesis $\frac{\partial
    V_{\beta,n}(\pi,\lambda)}{\partial \lambda} < \frac{1}{1-\beta}.$
  The partial derivative of first term in \eqref{algo:iterAlgo_diff}
  w.r.t. $\lambda$ is
  \begin{eqnarray*}
    1 + \beta \frac{\partial V_{\beta,n}((1-p)\pi,\lambda)}{\partial
      \lambda}.
  \end{eqnarray*}
  It is bounded above by $ \frac{1}{1-\beta}$ by our assumption.  The
  partial derivative of second term in \eqref{algo:iterAlgo_diff}
  w.r.t.  $\lambda$ is
  \begin{eqnarray*}
    \beta \frac{\partial V_{\beta,n}(1,\lambda)}{\partial \lambda}
  \end{eqnarray*}
  It is also bounded above by $\frac{1}{1-\beta}.$ Hence the partial
  derivative of $V_{\beta,n+1}(\pi,\lambda) $ w.r.t. $\lambda$ is
  bounded above by $\frac{1}{1-\beta}.$ By induction, it is true for
  all $n \geq 1.$ Using earlier technique, $V_{\beta,n}(\pi, \lambda)
  \to V_{\beta}(\pi, \lambda),$ uniformly. Therefore, $\frac{\partial
    V_{\beta}(\pi,\lambda) }{\partial \lambda} < \frac{1}{1-\beta}.$

\end{enumerate}
This completes the proof.
\qed

\subsection{Proof of Lemma~\ref{lemma:v-pi-monotone}}
\label{proof:v-pi-monotone}
The proof is analogous for both type A and type B arm. Also, it lead to same 
Lipschitz constant.  Here, we detail the proof for only type A arm and omit it
 for type B arm. 

 Fix $\lambda, \beta.$ Define
  \begin{eqnarray*}
    d_{\lambda}(\pi) := V_{1,\beta}(\pi,\lambda) - V_{0,\beta}(\pi,\lambda)
  \end{eqnarray*}
  We have to show that $d_{\lambda}(\pi_1 ) > d_{\lambda}(\pi_2) $
  whenever 
  $\pi_2 > \pi_1,$ for all $\pi_1, \pi_2 \in[0,1].$ Now
  %
  %{\footnotesize{
      \begin{eqnarray*}
        d_{\lambda}(\pi_2) - d_{\lambda}(\pi_1 ) =  \beta \left(V_{\beta}((1-p)\pi_1,
        \lambda) - V_{\beta}((1-p)\pi_2, \lambda)\right) 
        - (\rho_1 - \rho_0)(\pi_2 - \pi_1).
      \end{eqnarray*}
  %}}
  %
  From Lemma~\ref{lemma:struct-results-2}-$1$, we obtain
  \begin{eqnarray*}
    V_{\beta}((1-p)\pi_1, \lambda) - V_{\beta}((1-p)\pi_2, \lambda) <
    (1-p) 
    (\rho_1-\rho_0)|\pi_1 - \pi_2|.
  \end{eqnarray*} 
  Moreover,
  \begin{eqnarray*}
    \beta (1-p)(\rho_1-\rho_0)|\pi_1 - \pi_2| < (\rho_1-\rho_0)|\pi_1
    - \pi_2|.
  \end{eqnarray*}
  This implies $d_{\lambda}(\pi_2) - d_{\lambda}(\pi_1 ) < 0$ and our claim follows.
\qed

\subsection{Proof of Lemma~\ref{lemma:struct-result-3}}
\label{proof:struct-results-3}
\begin{itemize}
\item Type A arm:

 Fix $\pi,\beta.$ It is enough to show that 
 $\frac{\partial d_{\lambda}(\pi)}{\partial \lambda} =  \frac{\partial V_{1,\beta}(\pi, \lambda)}{\partial \lambda} - 
 \frac{\partial V_{0,\beta}(\pi, \lambda)}{\partial \lambda} < 0.$
From equation~\eqref{eq:value-fun-type-a}, taking partial derivative w.r.t. $\lambda,$ 
we obtain
\begin{eqnarray*}
\frac{\partial V_{1,\beta}(\pi, \lambda)}{\partial \lambda} &=& \beta 
\frac{\partial V_{0,\beta}(1, \lambda)}{\partial \lambda}, 
 \\
\frac{\partial V_{0,\beta}(\pi, \lambda)}{\partial \lambda} &=& \frac{1-\beta^{K(\pi, \pi_T)}}{1-\beta}+ \beta^{(K(\pi, \pi_T)+1)}  \frac{\partial V_{0,\beta}(1, \lambda)}{\partial \lambda}.
\end{eqnarray*}
Then
{\small{
\begin{eqnarray*}
\frac{\partial V_{1,\beta}(\pi, \lambda)}{\partial \lambda} - 
\frac{\partial V_{0,\beta}(\pi, \lambda)}{\partial \lambda} &=&
\beta \frac{\partial V_{0,\beta}(1, \lambda)}{\partial \lambda} (1 - \beta^{K(\pi,\pi_T)}) 
 - \frac{(1-\beta^{K(\pi,\pi_T)})}{(1-\beta)}. 
\end{eqnarray*}
}}
Rewriting, we have
\begin{eqnarray*}
\frac{\partial d_{\lambda}(\pi)}{\partial \lambda} &=&
(1-\beta^{K(\pi,\pi_T)})
\left[\beta \frac{\partial V_{0,\beta}(1, \lambda)}{\partial \lambda} 
- \frac{1}{(1-\beta)} \right].
\end{eqnarray*}
From~\eqref{eq:value-fun-type-a}, we can obtain 
\begin{eqnarray*}
\frac{\partial V_{0,\beta}(\pi,\lambda)}{\partial\lambda } 
= \frac{(1-\beta^{\widetilde{K}})}{(1-\beta^{\widetilde{K} +1})(1-\beta)},
\end{eqnarray*}
where $\widetilde{K} = K(1, \pi_T).$
After substitution and simplifying expressions, we have
\begin{eqnarray*}
\frac{\partial d_{\lambda}(\pi)}{\partial \lambda} =
\frac{(1-\beta^{K(\pi,\pi_T)})}{(1-\beta)} \frac{\left( \beta - \beta^{\widetilde{K}+1} - 1 +\beta^{\widetilde{K}+1} \right)}{(1- \beta^{\widetilde{K}+1})}.
\end{eqnarray*}
Clearly, $\frac{\partial d_{\lambda}(\pi)}{\partial \lambda} < 0$ for $\beta \in [0,1).$ 

%\revision{
\item Type B arm: 
%}

%\revision{
From~\eqref{eq:value-fun-type-b-piTnonz},  
 when $\pi_T \in (0, 1]$ and all $\pi \in [0,1]$ we can obtain following  
\begin{eqnarray*}
\frac{\partial V_{1,\beta}(\pi, \lambda)}{\partial \lambda} &=& 0, \\ 
\frac{\partial V_{0,\beta}(\pi, \lambda)}{\partial \lambda} &=& 
\begin{cases}
1 & \mbox{if $T(\pi) < \pi_T,$} \\
\frac{1}{1-\beta} & \mbox{if $T(\pi) \geq \pi_T.$}
\end{cases}
\end{eqnarray*} 
Clearly, we have $\frac{\partial d_{\eta}(\pi)}{\partial \eta} < 0$ for 
$\pi_T \in (0,1].$ 
When $\pi_T = 0,$ we can get
\begin{eqnarray*}
\frac{\partial V_{1,\beta}(\pi, \lambda)}{\partial \lambda} &=& \frac{\beta}{1-\beta},\\ 
\frac{\partial V_{0,\beta}(\pi, \lambda)}{\partial \lambda} &=& 
 \frac{1}{1-\beta}.
\end{eqnarray*} 
Hence $\frac{\partial d_{\lambda}(\pi)}{\partial \lambda} < 0.$
%}
\end{itemize}
This completes the proof.
\qed

%\subsection{Whittle index calculation of type-A arm}

\end{document}